\documentclass[11pt]{article}
\usepackage[margin=1in]{geometry}
\usepackage{amsmath,amssymb,amsthm}
\usepackage{graphicx}
\usepackage{booktabs}
\usepackage[colorlinks=true,linkcolor=blue,citecolor=blue,urlcolor=blue]{hyperref}

\newtheorem{theorem}{Theorem}
\newtheorem{corollary}{Corollary}
\newtheorem{proposition}{Proposition}

\title{Exact Makespan Bounds and the GPipe Homogeneity Bias\\ for Heterogeneous LLM Inference Pipelines}
\author{Zhu Wenbo\\
        Yantai Vocational College of Culture and Tourism\\
        \texttt{zhuwenbo@yvcct.edu.cn}}
\date{October 4, 2026}

\begin{document}
\maketitle

\begin{abstract}
We give an exact, critical-path-based characterization of the makespan of micro-batch pipelines that serve large language models (LLMs) on heterogeneous GPU clusters, and use it to quantify exactly how the classic homogeneous GPipe formula errs. In the pure-forward, communication-free, infinite-buffer setting, the makespan is exactly $T(M)=(M-1)\tau_{\max}+\sum_j\tau_j$, where $\tau_j$ is the per-micro-batch service time of pipeline stage $j$, $\tau_{\max}$ its maximum, and $M$ the number of micro-batches (Theorem~\ref{thm:1}). Under asynchronous communication --- the regime of modern NCCL-based serving systems --- the expression extends to $T(M)=\sum_j\tau_j+\sum_j d_j+(M-1)\tau_{\max}$, where $\sum_j d_j$ enters purely as an additive constant (Theorem~\ref{thm:2}); under blocking communication, $T(M)=(M-1)\sigma_{\max}+\sum_s\sigma_s$ with $\sigma_s=\tau_s+d_s$, so communication enters the steady-state slope iff some $d_s>\tau_{\max}-\tau_s$ (Theorem~\ref{thm:3}). These results (i) quantify exactly how the classic GPipe homogeneous formula overestimates makespan, (ii) decompose that bias into a homogeneity-overestimate term $\sum_j(\tau_{\max}-\tau_j)$ and a communication-ignoring underestimate term $\sum_j d_j$, and (iii) sandwich any partially-overlapped real system between the two extremes, yielding a computable ceiling on the throughput gain from communication--compute overlap. Every formula is validated point-wise against the exact dynamic-programming recurrences (maximum relative error $<10^{-9}$ across $1.6\times10^{5}$ randomized and hand-crafted configurations, zero counterexamples), and the bias decomposition and sandwich are additionally checked against $8{,}571$ measured requests from a live, three-generation heterogeneous GPU cluster.
\end{abstract}

\section{Introduction}\label{sec:intro}

Pipeline parallelism (PP) is the standard mechanism for serving and training large language models whose parameters exceed the memory of a single accelerator. The foundational GPipe analysis~\cite{r1} assumes \emph{homogeneous} stages --- every stage processes a micro-batch in the same unit time --- yielding the well-known makespan $S+M-1$ (in units of per-stage time) and bubble ratio $(S-1)/(S+M-1)$ for $S$ stages and $M$ micro-batches. This formula, and its variants, remain the default mental model in much of the systems literature for reasoning about pipeline efficiency.

Real inference deployments violate the homogeneity assumption in two distinct ways, both of which matter in practice.

\textbf{Heterogeneous accelerators.} A single cluster frequently mixes GPUs of different generations. The motivating deployment behind this work spans an RTX~5090 Laptop (24~GB GDDR7, Blackwell), dual RTX~3090 (24~GB GDDR6X, Ampere), and a GTX~1070 (8~GB GDDR5, Pascal) within one cluster --- three generations, with the GTX~1070 lacking tensor cores entirely. The resulting per-stage time ratio depends on which resource binds: for memory-bound decode it is the bandwidth ratio (896 / 936 / 256~GB/s, roughly $3.5\times$), while for compute-bound prefill the tensor-core-less GTX~1070 is up to three orders of magnitude slower (FP16 $\approx 0.1$~TFLOPS vs $\approx 142$~TFLOPS on the RTX~3090). Assigning each pipeline stage its own device therefore yields per-stage times $\tau_j$ that differ by an order of magnitude or more, not by one. Under such ratios the homogeneous formula is not merely an approximation --- it is systematically wrong.

\textbf{Non-zero communication latency.} Stages exchange activations and KV cache over PCIe, NVLink, or Ethernet with non-zero per-transfer latency $d_j$. Depending on the runtime, this transfer either overlaps with computation (asynchronous, the NCCL/CUDA-stream regime) or blocks the stage (synchronous). The two regimes produce qualitatively different makespan behavior, as we show.

The correct makespan of such a pipeline is the critical-path length of the resulting schedule --- a quantity that is classical in flow-shop scheduling~\cite{r2,r3}. Its exact value, however, and in particular its \emph{consequences for heterogeneous LLM inference}, have to our knowledge not been stated explicitly in the serving-systems literature, which largely works with asymptotic or heuristic treatments. This paper makes three things explicit and rigorously verified:

\begin{enumerate}
\item \textbf{The exact bias of the classic homogeneous GPipe formula}, decomposed into a homogeneity-overestimate term $\sum_j(\tau_{\max}-\tau_j)$ and a communication-ignoring underestimate term $\sum_j d_j$, together with a \emph{phase transition} condition under which communication flips from an additive constant into a throughput bottleneck (Corollaries~\ref{cor:1}--\ref{cor:4}).
\item \textbf{An exact, critical-path-based characterization} of the makespan under three communication regimes --- communication-free (Theorem~\ref{thm:1}), asynchronous (Theorem~\ref{thm:2}), and blocking (Theorem~\ref{thm:3}) --- each proved by a sandwich argument. These formulas are the classical flow-shop critical-path length instantiated for heterogeneous LLM inference; the novelty is the \emph{instantiation and its consequences}, not a new lower-bound technique.
\item \textbf{A computable sandwich} on the makespan of any real, partially-overlapped system, which yields a closed-form upper bound on the throughput gain achievable by communication--compute overlap --- turning the informal ``overlap helps'' into a quantifiable ceiling.
\end{enumerate}

The setting is deliberately restricted to the \textbf{pure-forward, infinite-buffer, micro-batch fill--drain} pipeline: stages process micro-batches in a fixed linear order, with no backward dependencies (unlike training) and no buffer limits. This is the structure of pipeline-parallel (PP) inference, where a forward-only stream of micro-batches --- requests, prefill chunks, or the per-step tokens of a decode batch --- flows through the stages. The per-micro-batch time $\tau_j$ is therefore a \emph{service} time, memory-bound in decode and compute-bound in prefill, not a raw FLOP count. Extensions --- bandwidth-limited communication, bounded buffers, and prefill--decode interleaving --- are discussed in \S\ref{sec:limitations} and remain open.

The rest of the paper is organized as follows. \S\ref{sec:model} defines the system model and notation. \S\ref{sec:thm1}--\S\ref{sec:thm3} state and prove the three theorems and their corollaries. \S\ref{sec:opt} draws out the optimization implications. \S\ref{sec:validation} reports numerical validation. \S\ref{sec:related} positions the work against flow-shop scheduling and LLM systems. \S\ref{sec:discussion} discusses the sandwich bound in the context of real deployments. \S\ref{sec:limitations} states limitations honestly. \S\ref{sec:conclusion} concludes. Appendices collect proof details, the full validation configuration table, and reproducibility instructions.

\section{System Model}\label{sec:model}

Consider a pipeline of $S$ stages $0,\dots,S-1$ processing a stream of $M$ micro-batches in order. Stage $j$ spends $\tau_j$ time \emph{serving} a single micro-batch (compute in prefill-like stages, weight-and-KV reads in memory-bound decode stages). The stages form a serial line: micro-batch $m$ must complete stage $j$ before it can enter stage $j+1$. Micro-batches enter in order (fill--drain), and buffers are unbounded, so a stage is never forced to wait for a downstream stage to accept output.

\textbf{Communication.} After stage $j$ finishes a micro-batch, the result is sent to stage $j+1$ with latency $d_j$ ($j=0,\dots,S-2$). The last stage sends nothing. We distinguish three regimes, which will each be analyzed separately:

\begin{enumerate}
\item \textbf{Communication-free} ($d_j\equiv 0$): transfers are instantaneous (e.g., stages co-located in shared memory with zero-copy).
\item \textbf{Asynchronous}: the send does \textbf{not} block the stage from serving the next micro-batch; the transfer proceeds in the background on a separate CUDA stream (the \texttt{cudaMemcpyAsync} / NCCL point-to-point regime). This is the default in modern serving runtimes.
\item \textbf{Blocking}: the send occupies the stage and cannot overlap with serving (a synchronous \texttt{cudaMemcpy}, or a single stream with no overlap). Stage $j$'s total per-micro-batch service time is then $\tau_j+d_j$.
\end{enumerate}

\textbf{Recursion.} Let $T[m][s]$ be the \textbf{completion} time of micro-batch $m$ at stage $s$. For the asynchronous regime the recursion is
\begin{equation}\label{eq:rec}
T[m][0] = T[m-1][0] + \tau_0,
\qquad
T[m][s] = \max\big(T[m][s-1] + d_{s-1},\ T[m-1][s]\big) + \tau_s \quad (s\ge 1),
\end{equation}
with boundary $T[-1][s]=0$. The makespan is $T(M):=T[M-1][S-1]$. We write $\tau_{\max}:=\max_j\tau_j$ and let $j^*:=\arg\max_j\tau_j$ be a bottleneck stage (ties broken arbitrarily). For the blocking regime the recursion collapses to
\begin{equation}\label{eq:rec-block}
T[m][0] = T[m-1][0] + \sigma_0,
\qquad
T[m][s] = \max\big(T[m][s-1],\ T[m-1][s]\big) + \sigma_s \quad (s\ge 1),
\end{equation}
where $\sigma_s := \tau_s + d_s$ for $s<S-1$ and $\sigma_{S-1}:=\tau_{S-1}$; this is exactly Eq.~\eqref{eq:rec} with $d\equiv0$ and $\tau$ replaced by $\sigma$.

\textbf{Notation summary.}

\begin{center}
\begin{tabular}{ll}
\toprule
Symbol & Meaning \\
\midrule
$S$ & number of pipeline stages \\
$M$ & number of micro-batches \\
$\tau_j$ & service time of stage $j$ for one micro-batch \\
$d_j$ & communication latency from stage $j$ to $j+1$ \\
$\tau_{\max}$, $j^*$ & bottleneck stage time and its index \\
$\sigma_s$ & effective service time under blocking communication \\
$T(M)$ & makespan (completion of the last micro-batch at the last stage) \\
\bottomrule
\end{tabular}
\end{center}

\textbf{Assumptions.} The analysis holds under five conditions, each made explicit so that violations map cleanly onto the open extensions of \S\ref{sec:limitations}: (A1) pure-forward (no backward dependency); (A2) infinite buffers; (A3) in-order fill--drain; (A4) constant per-stage times $\tau_j,d_j$ independent of $m$; (A5) point-to-point communication between adjacent stages only.

\section{Theorem 1 --- Communication-Free Closed Form}\label{sec:thm1}

\begin{theorem}\label{thm:1}
In the communication-free setting ($d_j\equiv 0$), for all $M\ge 1$,
\[
T(M) \;=\; (M-1)\,\tau_{\max} \;+\; \sum_{j=0}^{S-1}\tau_j .
\]
\end{theorem}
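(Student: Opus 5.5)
The plan is to unroll the recursion~\eqref{eq:rec} with $d\equiv 0$ into a longest-path formula and then evaluate that path length exactly. Index the cells of the grid by pairs $(m,s)$ with $0\le m\le M-1$ and $0\le s\le S-1$. Call a monotone staircase path from $(0,0)$ to $(M-1,S-1)$ one that moves by steps $(m,s)\to(m+1,s)$ or $(m,s)\to(m,s+1)$.

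\textbf{Step 1 (path representation).} I will show by induction on $m+s$ that
\[
T[m][s]=\max_{P}\sum_{(m',s')\in P}\tau_{s'},
\]
where $P$ ranges over staircase paths from $(0,0)$ to $(m,s)$. The base case is $T[0][0]=\tau_0$, using the boundary condition $T[-1][\cdot]=0$. For the first row ($m=0$) and the first column ($s=0$) there is only one predecessor, so the recursion gives a sum along the unique path. In the general case, $T[m][s]=\max(T[m][s-1],T[m-1][s])+\tau_s$ is exactly the max-plus recursion for longest paths, since every path into $(m,s)$ enters either from the left or from above. I expect this to be routine; the only care needed is at the boundary, where $T[-1][s]=0$ makes the missing predecessor harmless because all entries are nonnegative.

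\textbf{Step 2 (evaluating the maximum).} Any staircase path from $(0,0)$ to $(M-1,S-1)$ visits every stage index $s$ at least once, because $s$ increases by at most one per step. It uses exactly $M+S-1$ cells, so it contains exactly $M-1$ ``extra'' cells beyond one cell per stage. Its weight is therefore
\[
\sum_{j}\tau_j+\sum_{\text{extra cells}}\tau_{s'}\le \sum_j\tau_j+(M-1)\tau_{\max}.
\]
This gives the upper bound. For the lower bound I exhibit a specific path. Walk along row $m=0$ from stage $0$ to $j^*$, then go down column $j^*$ from $m=0$ to $m=M-1$, then walk along row $M-1$ to stage $S-1$. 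Its weight is exactly $\sum_j\tau_j+(M-1)\tau_{j^*}$. The two bounds together give the theorem.

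\textbf{Main obstacle.} The only step needing real care is the counting claim in Step 2, namely that each stage appears at least once and that exactly $M-1$ cells are surplus. I would make this precise by grouping the path's cells by stage. Let $k_s\ge 1$ be the number of cells in column $s$. Then $\sum_s k_s=M+S-1$, so $\sum_s(k_s-1)=M-1$, and the weight equals $\sum_s k_s\tau_s=\sum_s\tau_s+\sum_s(k_s-1)\tau_s$, which is at most $\sum_s\tau_s+(M-1)\tau_{\max}$.

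As an alternative that avoids paths, one can prove the upper bound directly by induction. The claim is $T[m][s]\le \sum_{j\le s}\tau_j+m\max_{j\le s}\tau_j$, and it follows by checking both branches of the max. The lower bound follows from $T[m][s]\ge T[m][s-1]+\tau_s$ and $T[m][s]\ge T[m-1][s]+\tau_s$ chained along the exhibited path. I would present the path version as the main argument and keep the inductive version as a remark.
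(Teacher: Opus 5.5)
Your proof is correct. It is essentially the paper's alternative derivation in Appendix~\ref{app:direct}: a longest monotone path in the $M\times S$ grid, with $c_s\ge 1$ cells per column, $\sum_s c_s=M+S-1$, and the surplus cells loaded onto the bottleneck column. You write it more carefully than the appendix does. You prove the path representation by induction on $m+s$, which the appendix only asserts, and you exhibit the extremal path explicitly.

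The paper's main-text proof takes a different, schedule-level route. It gives a lower bound valid for \emph{every} feasible schedule: the bottleneck's $M$ serial operations, preceded by the fill $\sum_{i<j^*}\tau_i$ and followed by the drain $\sum_{i>j^*}\tau_i$. It then shows the ASAP schedule attains this bound, by strong induction on $S$.

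That route proves something yours does not: the in-order ASAP schedule encoded by the recurrence is optimal over all schedules. The theorem as stated does not need this, since $T(M)$ is defined as $T[M-1][S-1]$.

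Your route evaluates the defining recurrence directly. So it never has to justify that $j^*$ ``never idles'' or that the downstream sub-line drains without queueing. It also extends verbatim to Theorem~\ref{thm:2}. Give each rightward step from column $s-1$ to column $s$ the extra weight $d_{s-1}$. Every staircase path makes exactly one such step for each $s$, so it collects exactly $\sum_j d_j$, and the same maximization gives the closed form. Theorem~\ref{thm:3} then follows by $\tau\to\sigma$.
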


\begin{proof}(critical-path sandwich)
We prove a matching lower and upper bound.

\textbf{Lower bound.} Consider any valid schedule. Stage $j^*$ must process all $M$ micro-batches serially, consuming $M\tau_{\max}$ of processor time on $j^*$; these $M$ operations cannot overlap in time. Before the first micro-batch can be processed at $j^*$, it must traverse stages $0,\dots,j^*-1$, costing at least $\sum_{i<j^*}\tau_i$. After the last micro-batch is processed at $j^*$, it must traverse stages $j^*+1,\dots,S-1$, costing at least $\sum_{i>j^*}\tau_i$. These three time spans are pairwise disjoint on the timeline (a single micro-batch experiences them sequentially, and the bottleneck's $M$ operations bracket the traversal of all others). Hence
\[
T(M) \;\ge\; \sum_{i<j^*}\tau_i + M\tau_{\max} + \sum_{i>j^*}\tau_i
\;=\; (M-1)\tau_{\max} + \sum_j\tau_j .
\]

\textbf{Upper bound (attainability).} We prove the upper bound by strong induction on the number of stages $S$. \emph{Base case} ($S=1$): there is no communication and $T(M)=M\tau_0=(M-1)\tau_{\max}+\tau_0$. \emph{Inductive step} ($S\ge2$). Let $j^*=\arg\max_j\tau_j$ and run the greedy ASAP schedule (every stage starts a micro-batch the moment it is idle and the micro-batch has completed the previous stage). If $j^*=0$, the source stage works continuously for $M\tau_{\max}$ and the last micro-batch drains in $\sum_{i>0}\tau_i$, attaining the bound. If $j^*\ge1$, the upstream sub-line $0,\dots,j^*-1$ has $j^*<S$ stages, so by the induction hypothesis it delivers micro-batch $m$ to $j^*$ by time $\sum_{i<j^*}\tau_i+m\,\tau_{\uparrow}$ with $\tau_{\uparrow}:=\max_{i<j^*}\tau_i\le\tau_{\max}$; since $j^*$ spends exactly $\tau_{\max}$ per micro-batch, the arrival rate is never below its service rate, so $j^*$ never idles after the first micro-batch arrives. Symmetrically, the downstream sub-line $j^*+1,\dots,S-1$ (empty if $j^*=S-1$) drains the last micro-batch in $\sum_{i>j^*}\tau_i$. The schedule is feasible and achieves
\[
T(M) \;=\; \sum_{i<j^*}\tau_i + M\tau_{\max} + \sum_{i>j^*}\tau_i \;=\; (M-1)\tau_{\max} + \sum_j\tau_j .
\]
Lower and upper bounds coincide.
\end{proof}

A complete, alternative derivation via direct solution of the recurrence \eqref{eq:rec} with $d\equiv0$ is given in Appendix~\ref{app:direct}; it confirms the same value without the ASAP-schedule argument.

\begin{corollary}\label{cor:1}(bias of the classic formula)
Interpreting the classic GPipe formula as the homogeneous value obtained by assigning every stage the bottleneck time ($\tau\equiv\tau_{\max}$), it gives $T_c=(S+M-1)\tau_{\max}$, and
\[
T_c - T(M) = \sum_j(\tau_{\max}-\tau_j) \ge 0,
\]
i.e., the classic formula \textbf{systematically overestimates} makespan (equivalently, underestimates throughput), with the gap equal to the total shortfall of each stage relative to the bottleneck. The gap grows monotonically with heterogeneity, is zero iff all stages are perfectly balanced, and --- critically --- is independent of $M$: it is a one-time fill cost, not a throughput penalty.
\end{corollary}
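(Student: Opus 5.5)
The argument specializes Theorem~\ref{thm:1} to the homogeneous pipeline and then subtracts. First, apply Theorem~\ref{thm:1} to the fictitious pipeline in which every stage has service time $\tau_{\max}$. Its bottleneck time is again $\tau_{\max}$ and its stage sum is $S\tau_{\max}$, so its makespan is $(M-1)\tau_{\max}+S\tau_{\max}=(S+M-1)\tau_{\max}$. This is exactly $T_c$, the classic GPipe value $S+M-1$ rescaled by the unit $\tau_{\max}$. This step also shows that the ``interpretation'' in the statement is consistent with the model.

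Second, subtract the heterogeneous makespan given by Theorem~\ref{thm:1}:
\[
T_c-T(M)=\big[(M-1)\tau_{\max}+S\tau_{\max}\big]-\Big[(M-1)\tau_{\max}+\sum_j\tau_j\Big]=\sum_{j=0}^{S-1}(\tau_{\max}-\tau_j).
\]
The $(M-1)\tau_{\max}$ terms cancel because the steady-state slope is set by the bottleneck in both pipelines. This cancellation is precisely the independence from $M$ claimed in the statement.

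Third, read off the qualitative claims.
\begin{itemize}
\item Each summand $\tau_{\max}-\tau_j$ is nonnegative by the definition of $\tau_{\max}$, so the gap is $\ge 0$.
\item A sum of nonnegative terms vanishes iff each term vanishes, so the gap is zero iff $\tau_j=\tau_{\max}$ for all $j$, i.e.\ all stages are balanced.
\item Monotonicity in heterogeneity needs a precise meaning. I would state it as follows: with $\tau_{\max}$ held fixed, decreasing any non-bottleneck $\tau_j$ strictly increases the gap. Alternatively, the gap equals $S\tau_{\max}-\sum_j\tau_j=S(\tau_{\max}-\bar\tau)$, the spread between the maximum and the mean $\bar\tau$, which is nondecreasing under any change that raises $\tau_{\max}$ while fixing the mean.
\end{itemize}

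The only step needing care is the vague ``grows monotonically with heterogeneity'' clause, which I would pin down through the max-minus-mean form. Everything else is direct substitution into Theorem~\ref{thm:1}.
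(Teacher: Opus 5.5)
Your proposal is correct and follows the same route the paper indicates in its Related Work discussion: substitute $\tau_{\max}$ for every $\tau_j$ in Theorem~\ref{thm:1} and subtract, so the $(M-1)\tau_{\max}$ terms cancel and the gap $\sum_j(\tau_{\max}-\tau_j)\ge 0$ vanishes iff all stages equal $\tau_{\max}$. Your extra step of pinning down ``grows monotonically with heterogeneity'' via $S(\tau_{\max}-\bar\tau)$ is worthwhile, since the paper leaves that clause informal. Indeed, the gap is not monotone in, e.g., the $\max_j\tau_j/\min_j\tau_j$ ratio used in Table~\ref{tab:bias}.
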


\subsection{Illustrative example}\label{sec:example}

To make the formula concrete, walk through the illustrative 4-stage partition 5090/3090/3090/1070 with $\tau=[0.5,1,1,5]$ ($S=4$, bottleneck at the tail, $j^*=3$). For a single micro-batch ($M=1$) the makespan is simply the sum of stage times:
\[
T(1) = (1-1)\cdot 5 + (0.5+1+1+5) = 7.5 .
\]
The classic formula instead predicts $T_c=(S+M-1)\tau_{\max}=(4+1-1)\cdot 5=20$, overestimating by $(20-7.5)/7.5=166.7\%$. At $M=8$ the closed form gives $T(8)=7\cdot 5+7.5=42.5$ against $T_c=(4+8-1)\cdot 5=55$, a $29.4\%$ overestimate. The absolute gap is constant ($55-42.5=12.5=\sum_j(\tau_{\max}-\tau_j)$), so its \emph{relative} impact shrinks as $M$ grows --- the classic formula's error is purely a fill-time artifact, exactly as Corollary~\ref{cor:1} states. This one number --- $166.7\%$ at $M=1$ --- is the headline empirical motivation: a partitioner trusting the homogeneous formula at low micro-batch counts over-provisions capacity by more than a factor of two.

\section{Theorem 2 --- Asynchronous Communication Extension}\label{sec:thm2}

\begin{theorem}\label{thm:2}
Under asynchronous communication, for all $M\ge 1$,
\[
T(M) \;=\; \sum_{j=0}^{S-1}\tau_j \;+\; \sum_{j=0}^{S-2} d_j \;+\; (M-1)\,\tau_{\max} .
\]
\end{theorem}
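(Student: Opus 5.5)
We prove the formula directly from the recursion \eqref{eq:rec}, using a lattice-path (max-plus) representation of $T[m][s]$; the same argument also recovers Theorem~\ref{thm:1} when $d\equiv0$.

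\textbf{Step 1 (path expansion).} By induction on $m+s$ we show
\[
T[m][s] \;=\; \max_{P}\ \Big(\sum_{(i,k)\in P}\tau_k \;+\; \sum_{\text{down-steps } (i,k)\to(i,k+1)\in P} d_k\Big),
\]
where $P$ ranges over monotone lattice paths from $(0,0)$ to $(m,s)$. Each step either increments the micro-batch index ($(i,k)\to(i+1,k)$, a ``right'' step carrying no latency) or the stage index ($(i,k)\to(i,k+1)$, a ``down'' step carrying $d_k$). Every visited cell contributes its $\tau_k$.

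The induction is immediate from \eqref{eq:rec}. The term $T[m][s-1]+d_{s-1}$ extends optimal paths ending at $(m,s-1)$ by a down-step. The term $T[m-1][s]$ extends optimal paths ending at $(m-1,s)$ by a right-step. The final $+\tau_s$ adds the cell $(m,s)$ itself. The boundaries $T[-1][s]=0$ and $T[m][0]=(m+1)\tau_0$ are consistent with this, since a path into $(m,0)$ can only use right-steps.

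\textbf{Step 2 (evaluate the max at $(M-1,S-1)$).} Every path from $(0,0)$ to $(M-1,S-1)$ makes exactly one down-step out of each stage $k=0,\dots,S-2$. Hence the latency contribution is always exactly $\sum_{j\le S-2}d_j$, independent of $P$. This is the key observation, and it is why communication enters only additively.

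A path is determined by how many cells $n_k\ge1$ it occupies in each stage $k$, subject to $\sum_k n_k=S+M-1$. Its $\tau$-weight is $\sum_k n_k\tau_k=\sum_k\tau_k+\sum_k(n_k-1)\tau_k$, with $\sum_k(n_k-1)=M-1$. This is maximized by putting all $M-1$ extra cells on $j^*$, which gives $\sum_j\tau_j+(M-1)\tau_{\max}$. The upper bound $\sum_k(n_k-1)\tau_k\le (M-1)\tau_{\max}$ holds for every path, and the path that goes down to row $j^*$, runs right across it, then goes down attains it. Adding the constant latency term yields the claimed formula.

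\textbf{Main obstacle.} The only delicate point is Step 1: the max-plus path representation must be stated with the correct weights, namely $d$ only on stage-transitions and $\tau$ on every cell, including the $(0,0)$ boundary, and it must be checked that the boundary convention $T[-1][s]=0$ introduces no spurious paths. A potential worry is whether a critical path could benefit from ``waiting'' elsewhere, for instance at a high-$d$ link. That possibility is ruled out precisely because each link is crossed exactly once by every path, so I expect the remainder of the argument to be routine.

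If a sandwich argument is preferred for stylistic consistency with Theorem~\ref{thm:1}, the explicit path above serves as the lower bound. The upper bound then follows from the bound on $\sum_k(n_k-1)\tau_k$ valid for every path.
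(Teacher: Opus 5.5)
Your proof is correct, but it takes a different route from the paper's proof of Theorem~\ref{thm:2}.

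The paper uses a schedule-level sandwich. Its lower bound comes from the critical path in which micro-batch $0$ travels to $j^*$, stage $j^*$ serves all $M$ micro-batches back to back, and micro-batch $M-1$ drains to the last stage. Its upper bound is a strong induction on $S$ for the ASAP fill--drain schedule. There, the upstream sub-line keeps pace with $j^*$ (since $\tau_{\uparrow}\le\tau_{\max}$), so $j^*$ never idles, and the downstream sub-line drains the last micro-batch without delay.

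Your maximizing path is exactly that critical path, so the lower halves coincide; the real difference is the upper bound. Instead of tracking arrival times through sub-lines, you bound every monotone path at once. Each path crosses each link $k\to k+1$ exactly once, so the latency contribution is the constant $\sum_{j\le S-2}d_j$. The counting argument on the $n_k$ then caps the $\tau$-weight at $\sum_j\tau_j+(M-1)\tau_{\max}$. This is the paper's own longest-path argument of Appendix~\ref{app:direct}, which the paper gives only for $d\equiv0$, extended by putting $d_k$ on stage transitions.

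Your version has three advantages. It evaluates the defining recurrence \eqref{eq:rec} exactly, so it needs nothing like the paper's informal ``symmetrically, the downstream sub-line drains'' step; that step tacitly needs $\tau_i\le\tau_{\max}$ for $i>j^*$ to rule out downstream queueing. It treats tied bottlenecks and every bottleneck position uniformly. And it makes Corollary~\ref{cor:2} structurally evident.

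The paper's version gives an operational picture of the schedule (the bottleneck never idles) and states its lower bound for arbitrary valid schedules. Your path weight is equally a lower bound for any schedule respecting the same precedence edges, though, so nothing is lost.

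The boundary point you flag is settled by nonnegativity. For $m=0$ the competing term $T[-1][s]=0$ never exceeds $T[0][s-1]+d_{s-1}$, since all $\tau_j,d_j\ge0$, so no spurious paths arise.
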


\begin{proof}(critical-path sandwich)

\textbf{Lower bound.} Consider the critical path through the bottleneck stage $j^*$. Stage $j^*$ must process all $M$ micro-batches serially, consuming $M\tau_{\max}$. The first micro-batch traverses stages $0,\dots,j^*-1$ before reaching $j^*$, paying $\sum_{i<j^*}\tau_i+\sum_{i<j^*}d_i$: on the first micro-batch's path, its own computation and its transfers are serialized (a transfer of micro-batch $0$ from stage $i$ to $i+1$ cannot begin until stage $i$ has computed it, and stage $i+1$ cannot compute it until the transfer ends). After the last micro-batch leaves $j^*$, it traverses stages $j^*+1,\dots,S-1$, paying $\sum_{i>j^*}\tau_i+\sum_{i=j^*}^{S-2}d_i$. These three phases are pairwise disjoint on the timeline, and every valid schedule must contain them on some critical path; hence
\[
T(M) \;\ge\; \sum_{i<j^*}\tau_i+\sum_{i<j^*}d_i + M\tau_{\max} + \sum_{i>j^*}\tau_i+\sum_{i=j^*}^{S-2}d_i
\;=\; \sum_j\tau_j + \sum_{j=0}^{S-2}d_j + (M-1)\tau_{\max}.
\]

\textbf{Upper bound (attainability).} We prove the upper bound by strong induction on the number of stages $S$. \emph{Base case} ($S=1$): there is no communication and $T(M)=M\tau_0=(M-1)\tau_{\max}+\tau_0$. \emph{Inductive step} ($S\ge2$). Let $j^*=\arg\max_j\tau_j$ and run the fill--drain schedule (micro-batches released in order $0,1,\dots,M-1$, every stage ASAP). If $j^*=0$, the source stage works continuously for $M\tau_{\max}$ and the last micro-batch drains through stages $1,\dots,S-1$ in $\sum_{i>0}\tau_i+\sum_{i=0}^{S-2}d_i$, attaining the bound. If $j^*\ge1$, the upstream sub-line $0,\dots,j^*-1$ has $j^*<S$ stages, so by the induction hypothesis (Theorem~\ref{thm:2} for $j^*$ stages) it delivers micro-batch $m$ to $j^*$ --- that is, completes stage $j^*-1$ and its send --- by time $\sum_{i<j^*}(\tau_i+d_i)+m\,\tau_{\uparrow}$ with $\tau_{\uparrow}:=\max_{i<j^*}\tau_i\le\tau_{\max}$. Because sends are asynchronous, an upstream stage's \emph{service} period is exactly $\tau_i$ (transfers run in the background on a separate stream, consuming memory bandwidth but not SM compute), so this arrival bound carries no slowdown from the sub-line's own transfers beyond the additive $\sum_{i<j^*}d_i$ already accounted for. We then prove by induction on $m$ that $T[m][j^*]=\sum_{i<j^*}(\tau_i+d_i)+(m+1)\tau_{\max}$: the base case $m=0$ holds, and for $m\ge1$ the arrival time is at most $\sum_{i<j^*}(\tau_i+d_i)+m\,\tau_{\max}=T[m-1][j^*]$ --- the moment $j^*$ becomes free --- so $j^*$ never idles and the claim follows. Symmetrically, the downstream sub-line $j^*+1,\dots,S-1$ (empty if $j^*=S-1$) drains the last micro-batch in exactly $\sum_{i>j^*}\tau_i+\sum_{i=j^*}^{S-2}d_i$. The makespan is therefore
\[
T(M) \;=\; T[M-1][j^*] + \sum_{i>j^*}\tau_i+\sum_{i=j^*}^{S-2}d_i
\;=\; \sum_j\tau_j+\sum_{j=0}^{S-2}d_j+(M-1)\tau_{\max}.
\]
Lower and upper bounds coincide.
\end{proof}

\begin{corollary}\label{cor:2}(communication is additive)
The total communication latency $\sum_j d_j$ enters only as an additive constant and does \textbf{not} change the steady-state slope $\tau_{\max}$. Consequently, reducing communication lowers only the one-time fill cost, while the throughput ceiling ($\propto 1/\tau_{\max}$) is governed by compute balance alone.
\end{corollary}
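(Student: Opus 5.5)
The plan is to read Corollary~\ref{cor:2} off the closed form of Theorem~\ref{thm:2} directly, with no new scheduling argument. I would write $T(M)=A+(M-1)\tau_{\max}$ with intercept $A:=\sum_{j=0}^{S-1}\tau_j+\sum_{j=0}^{S-2}d_j$. The first observation is that $T(M)$ is an affine function of $M$ whose slope is $\tau_{\max}=\max_j\tau_j$. This slope depends only on the service times. Every $d_j$ appears only inside $A$, and does so linearly with coefficient $1$.

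\textbf{Intercept claim.} I would make this precise as two statements.
\begin{itemize}
\item The marginal cost in $M$ is $T(M+1)-T(M)=\tau_{\max}$ for every $M\ge1$, regardless of the vector $(d_0,\dots,d_{S-2})$.
\item For two communication vectors $d,d'$ with the same $\tau$, the difference $T_d(M)-T_{d'}(M)=\sum_j(d_j-d'_j)$ is independent of $M$.
\end{itemize}
Hence reducing communication lowers the makespan by the same constant for every micro-batch count. In the paper's language, it lowers only the one-time fill cost.

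\textbf{Throughput claim.} For the throughput ceiling, I would define throughput as $M/T(M)$ and compute
\[
\lim_{M\to\infty}\frac{M}{T(M)}=\lim_{M\to\infty}\frac{M}{A+(M-1)\tau_{\max}}=\frac{1}{\tau_{\max}}.
\]
This uses only $A<\infty$, so the limit does not depend on any $d_j$. I would also note that $M/T(M)<1/\tau_{\max}$ for every finite $M$ whenever $A>\tau_{\max}$, which holds automatically if $S\ge2$ and some $d_j>0$ or some other $\tau_j>0$. So $1/\tau_{\max}$ is a genuine ceiling that is approached but not attained. The only way to raise this ceiling is to lower $\tau_{\max}$, that is, to change the compute balance across stages.

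\textbf{Main obstacle.} The hard part is not the algebra but being precise about what ``throughput ceiling'' means. I would fix it as the asymptotic rate $\lim_{M\to\infty} M/T(M)$, equivalently the steady-state slope $\lim_{M\to\infty} T(M)/M$. I would also state explicitly that the conclusion inherits the assumptions of Theorem~\ref{thm:2}:
\begin{itemize}
\item $d_j$ is a pure latency, not a bandwidth-limited occupancy of the link. If it were the latter, a link could itself become a server with period $d_j$.
\item The regime is asynchronous.
\end{itemize}
The contrast with the blocking regime of Theorem~\ref{thm:3}, where $d_j$ enters the slope through $\sigma_{\max}$, shows that these hypotheses are exactly what the argument uses.
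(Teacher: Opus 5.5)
Your proposal is correct and takes the paper's route: the paper treats Corollary~\ref{cor:2} as immediate from the affine closed form of Theorem~\ref{thm:2}, in which every $d_j$ appears only in the $M$-independent intercept, and your throughput limit is the same divide-by-$M$ argument the paper uses to prove Proposition~\ref{prop:1}. Your additions are accurate refinements, not a different argument: the strict bound $M/T(M)<1/\tau_{\max}$ whenever $A>\tau_{\max}$, and the explicit pure-latency, asynchronous-regime caveat.
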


\begin{corollary}\label{cor:3}(full bias decomposition)
Combining Theorem~\ref{thm:2} with the classic formula:
\[
T_c - T(M) = \underbrace{\sum_j(\tau_{\max}-\tau_j)}_{\text{homogeneity: overestimate}} \;-\; \underbrace{\sum_j d_j}_{\text{communication: underestimate}} .
\]
When $\sum_j d_j > \sum_j(\tau_{\max}-\tau_j)$, the classic formula \textbf{underestimates} makespan --- a sign reversal absent from the communication-free case. This quantifies precisely when the homogeneous formula's optimism about communication outweighs its pessimism about heterogeneity.
\end{corollary}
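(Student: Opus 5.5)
The plan is a direct subtraction: combine the classic value from Corollary~\ref{cor:1} with the exact asynchronous makespan of Theorem~\ref{thm:2}. As in Corollary~\ref{cor:1}, I will interpret the classic GPipe formula as the homogeneous value with every stage at the bottleneck time and communication ignored, so that $T_c=(S+M-1)\tau_{\max}$. I will take Theorem~\ref{thm:2} as given:
\[
T(M)=\sum_{j=0}^{S-1}\tau_j+\sum_{j=0}^{S-2}d_j+(M-1)\tau_{\max}.
\]

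First, I will subtract the two expressions. The $(M-1)\tau_{\max}$ terms cancel exactly, leaving
\[
T_c-T(M)=S\tau_{\max}-\sum_{j=0}^{S-1}\tau_j-\sum_{j=0}^{S-2}d_j .
\]
Second, I will write $S\tau_{\max}=\sum_{j=0}^{S-1}\tau_{\max}$ and merge it with the service-time sum, which gives the homogeneity term $\sum_j(\tau_{\max}-\tau_j)$. The communication sum runs over $j=0,\dots,S-2$, since the last stage sends nothing; equivalently one may set $d_{S-1}:=0$ so that both sums range over all stages. That reproduces the displayed decomposition.

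For the interpretive claims, I will use that each summand $\tau_{\max}-\tau_j$ is nonnegative, so the first term is an overestimate, and each $d_j$ is nonnegative, so the second is an underestimate. The sign of $T_c-T(M)$ is therefore negative exactly when $\sum_j d_j>\sum_j(\tau_{\max}-\tau_j)$. This difference is independent of $M$, so the sign condition holds uniformly in $M$.

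To show that this reversal is genuinely absent in the communication-free case, I will note that setting $d\equiv0$ recovers Corollary~\ref{cor:1}, where the gap is always $\ge 0$. To show the reversal is actually attained, I will exhibit a balanced pipeline ($\tau_j\equiv\tau_{\max}$) with any single $d_j>0$: its homogeneity term is $0$, so $T_c<T(M)$.

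I do not expect a real obstacle. The only point requiring care is the index-range bookkeeping for $d_j$, namely that there are $S-1$ links but $S$ stages, so the decomposition is stated with that convention made explicit.
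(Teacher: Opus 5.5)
Your proposal is correct and matches the paper's implicit argument (the paper gives no separate proof beyond ``combining Theorem~\ref{thm:2} with the classic formula''): you subtract Theorem~\ref{thm:2} from $T_c=(S+M-1)\tau_{\max}$ of Corollary~\ref{cor:1}, cancel the $(M-1)\tau_{\max}$ terms, and regroup $S\tau_{\max}-\sum_j\tau_j$ as $\sum_j(\tau_{\max}-\tau_j)$. Your additions go slightly beyond what the paper states but are sound: the $d_{S-1}:=0$ index convention, the $M$-independence of the sign, and the balanced-pipeline witness showing the reversal actually occurs, which implicitly needs $S\ge 2$ so that a link exists.
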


\section{Theorem 3 --- Blocking Communication Extension}\label{sec:thm3}

\begin{theorem}\label{thm:3}
Under blocking communication --- where sending a micro-batch occupies the stage and cannot overlap computation --- define the effective service time $\sigma_s=\tau_s+d_s$ for $0\le s\le S-2$ and $\sigma_{S-1}=\tau_{S-1}$ (the last stage sends nothing), with $\sigma_{\max}:=\max_s\sigma_s$. Then for all $M\ge 1$,
\[
T(M) \;=\; (M-1)\,\sigma_{\max} \;+\; \sum_{s=0}^{S-1}\sigma_s
\;=\; (M-1)\,\sigma_{\max} \;+\; \sum_j\tau_j \;+\; \sum_j d_j .
\]
\end{theorem}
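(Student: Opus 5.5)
The plan is to reduce Theorem~\ref{thm:3} to Theorem~\ref{thm:1} by a change of variables. The paper already notes that the blocking recursion \eqref{eq:rec-block} is exactly the asynchronous recursion \eqref{eq:rec} with $d\equiv0$ and with each $\tau_s$ replaced by $\sigma_s$. So the first step is to verify this identification at the level of the schedule, not just of the formulas.

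Under blocking communication, stage $s<S-1$ is occupied for a contiguous interval of length $\tau_s+d_s$ per micro-batch: it computes, then transmits. The downstream stage $s+1$ may start micro-batch $m$ only once that transmission has ended. Hence the instant at which stage $s$ becomes free coincides with the instant at which micro-batch $m$ becomes available to stage $s+1$. If $T[m][s]$ denotes this combined ``finish-and-sent'' time, the dependencies are exactly $T[m][s]=\max(T[m][s-1],T[m-1][s])+\sigma_s$, with zero transfer delay between stages. The last stage sends nothing, so $\sigma_{S-1}=\tau_{S-1}$, and the makespan is still $T[M-1][S-1]$. This is precisely a communication-free pipeline with service times $\sigma_0,\dots,\sigma_{S-1}$.

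The second step is to apply Theorem~\ref{thm:1} verbatim to this surrogate pipeline. Its proof uses only that the service times are positive constants (A4), so it gives $T(M)=(M-1)\sigma_{\max}+\sum_s\sigma_s$. The second equality in the statement is then bookkeeping:
\[
\sum_{s=0}^{S-1}\sigma_s=\sum_{s=0}^{S-2}(\tau_s+d_s)+\tau_{S-1}=\sum_j\tau_j+\sum_{j=0}^{S-2}d_j .
\]
I would also record the consequence advertised in the abstract. We have $\sigma_{\max}>\tau_{\max}$ iff some $s\le S-2$ satisfies $\tau_s+d_s>\tau_{\max}$, i.e.\ $d_s>\tau_{\max}-\tau_s$. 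Only then does communication alter the slope. I would leave this for a corollary rather than the proof itself.

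The only genuine obstacle is justifying the reduction when the lower bound is phrased over \emph{any} valid schedule rather than the ASAP recursion. One must check that no blocking schedule can overlap a stage's send with the next micro-batch's computation at that stage; this is exactly the blocking hypothesis. One must also check that the send cannot start before the computation ends, which holds by causality. Given these, each stage's busy time per micro-batch is at least $\sigma_s$. Stage $\arg\max_s\sigma_s$ then forms the bottleneck in the sandwich argument of Theorem~\ref{thm:1}: the fill prefix $\sum_{i<s^*}\sigma_i$, the bottleneck block $M\sigma_{\max}$, and the drain suffix $\sum_{i>s^*}\sigma_i$ are disjoint. The upper bound is attained by ASAP, as in Theorem~\ref{thm:1}. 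Alternatively, the closed-form solution of the recurrence in Appendix~\ref{app:direct}, applied with $\tau\to\sigma$, gives equality directly and avoids repeating the schedule argument.
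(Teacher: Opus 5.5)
Your proposal is correct and follows the paper's own proof: identify the blocking pipeline with a communication-free pipeline whose stage times are $\sigma_s$, apply Theorem~\ref{thm:1} verbatim, and finish with the bookkeeping $\sum_s\sigma_s=\sum_j\tau_j+\sum_{j=0}^{S-2}d_j$. Your schedule-level justification goes a little further than the paper, which states the reduction only at the level of the recurrence. You define the ``finish-and-sent'' time and check that any blocking schedule keeps each stage busy for at least $\sigma_s$ per micro-batch, so the sandwich of Theorem~\ref{thm:1} transfers with bottleneck $\arg\max_s\sigma_s$; this makes explicit what the paper only asserts.
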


\begin{proof}
Under blocking communication the send $d_s$ and the compute $\tau_s$ occupy the same stage serially, so stage $s$'s total service time for a single micro-batch is exactly $\sigma_s$. Substituting into the recurrence \eqref{eq:rec}, the blocking regime becomes recurrence \eqref{eq:rec-block}, which is \emph{identical in form} to the communication-free recurrence (Theorem~\ref{thm:1}) with $\tau$ replaced by $\sigma$. The reduction is therefore exact: the blocking pipeline is a communication-free pipeline whose stage times are $\sigma_s$. Theorem~\ref{thm:1} applies verbatim, yielding
\[
T(M) = (M-1)\sigma_{\max} + \sum_s\sigma_s .
\]
The equality with $(M-1)\sigma_{\max}+\sum_j\tau_j+\sum_j d_j$ follows because $\sum_s\sigma_s = \sum_{j=0}^{S-1}\tau_j + \sum_{j=0}^{S-2}d_j$.
\end{proof}

\begin{corollary}\label{cor:4}(communication enters the steady state --- a phase transition)
Since $\sigma_{\max}=\max\big(\tau_{S-1},\max_{s<S-1}(\tau_s+d_s)\big)$, communication affects the steady-state slope iff $\sigma_{\max}>\tau_{\max}$, i.e., iff some $d_s>\tau_{\max}-\tau_s$. Contrast Theorem~\ref{thm:2}, where the slope is always $\tau_{\max}$ regardless of $d_j$: blocking communication turns a cost that is a pure additive constant (async) into a throughput bottleneck once it exceeds the compute slack $\tau_{\max}-\tau_s$. The slack is smallest for the stage just below the bottleneck, so a slow link out of \emph{that} stage is the first to degrade throughput.
\end{corollary}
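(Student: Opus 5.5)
The statement to prove is Corollary~\ref{cor:4}. I would derive it directly from Theorems~\ref{thm:2} and~\ref{thm:3}, by comparing the two slopes in $M$.

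\textbf{Step 1 (the slope).} By Theorem~\ref{thm:3}, the blocking makespan is affine in $M$ with slope $\sigma_{\max}$. Because $\sigma_{S-1}=\tau_{S-1}$ and $\sigma_s=\tau_s+d_s$ for $s<S-1$, the identity $\sigma_{\max}=\max\big(\tau_{S-1},\max_{s<S-1}(\tau_s+d_s)\big)$ holds by definition. Since $d_s\ge0$, we have $\sigma_s\ge\tau_s$ for every $s$, and hence $\sigma_{\max}\ge\tau_{\max}$. Communication therefore changes the slope exactly when the inequality is strict.

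\textbf{Step 2 (the equivalence).} Next I show that $\sigma_{\max}>\tau_{\max}$ holds if and only if there is some $s\le S-2$ with $d_s>\tau_{\max}-\tau_s$.
\begin{itemize}
\item[($\Leftarrow$)] Such an $s$ gives $\sigma_s=\tau_s+d_s>\tau_{\max}$, so $\sigma_{\max}>\tau_{\max}$.
\item[($\Rightarrow$)] Let $s$ attain $\sigma_{\max}$. It cannot be $S-1$, because $\sigma_{S-1}=\tau_{S-1}\le\tau_{\max}$. So $s\le S-2$, and $\tau_s+d_s>\tau_{\max}$ rearranges to the claimed condition.
\end{itemize}
Comparing with Theorem~\ref{thm:2}, the asynchronous slope is $\tau_{\max}$ independent of $d$. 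The difference between the blocking and asynchronous makespans is then $(M-1)(\sigma_{\max}-\tau_{\max})$. This is zero for all $M$ below the threshold and grows linearly in $M$ above it, which is the ``phase transition'': an additive constant turns into a throughput penalty.

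\textbf{Step 3 (the final remark).} The claim that ``the slack is smallest for the stage just below the bottleneck'' is the delicate part, and I expect it to be the main obstacle. Read literally, the slack $\tau_{\max}-\tau_s$ is minimized by the stage with the largest $\tau_s$ among those with $s\le S-2$. That is the bottleneck $j^*$ itself if $j^*\le S-2$ (slack zero, so any $d_{j^*}>0$ already degrades throughput). Otherwise it is the stage with the second-largest service time, not necessarily the one positionally adjacent to $j^*$.

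I would therefore state and prove the remark in this corrected form: the first link to become a bottleneck as all $d_s$ grow uniformly is the one out of the stage with the largest $\tau_s$ among $s\le S-2$. This follows from minimizing $\tau_{\max}-\tau_s$ over $s$. I would interpret ``just below the bottleneck'' as ``next in service time'' rather than ``next in position''.
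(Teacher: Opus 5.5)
Your Steps 1--2 follow the paper's own argument. The paper justifies the corollary only through the inline clause ``Since $\sigma_{\max}=\max(\tau_{S-1},\max_{s<S-1}(\tau_s+d_s))$'' together with the slopes from Theorems~\ref{thm:2} and~\ref{thm:3}; you make this rigorous by restricting to $s\le S-2$ and noting that the maximizer of $\sigma$ cannot be the last stage.

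Your Step~3 correction is also right, since the paper asserts the closing remark without proof. The remark holds only when $\tau_{\max}$ is attained solely at stage $S-1$ and ``just below'' means next-largest $\tau_s$, as in configuration~11, where that stage also happens to be adjacent. If some bottleneck stage has index at most $S-2$, its own outgoing link has zero slack. Under the positional reading, a non-adjacent stage can have the larger $\tau_s$. Your uniform-growth reformulation is therefore the accurate statement.
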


\begin{corollary}\label{cor:5}(async--blocking sandwich bounds the overlap gain)
Because $\sigma_s\ge\tau_s$ and $\sigma_{S-1}=\tau_{S-1}$, we have $\sigma_{\max}\ge\tau_{\max}$, hence
\[
T_{\text{block}} - T_{\text{async}} \;=\; (M-1)(\sigma_{\max}-\tau_{\max}) \;\ge\; 0 .
\]
A real serving system with partial overlap (NCCL async with non-zero synchronization points) has makespan $T_{\text{real}}\in[T_{\text{async}},T_{\text{block}}]$, so the gap is exactly the \emph{upper bound on the throughput gain available from communication--compute overlap}. With an overlap ratio $\rho\in[0,1]$ (the fraction of transfer time that overlaps compute; $\rho=1$ is the fully-async ideal), the effective steady-state slope is $\tau_{\max}+(1-\rho)\,(\sigma_{\max}-\tau_{\max})$, interpolating linearly between the two extremes: $\rho=1$ recovers the async slope $\tau_{\max}$, $\rho=0$ the blocking slope $\sigma_{\max}$.
\end{corollary}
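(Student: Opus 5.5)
The statement to prove is Corollary~\ref{cor:5}. It has three parts: the inequality $\sigma_{\max}\ge\tau_{\max}$, the exact gap $T_{\text{block}}-T_{\text{async}}$, and the sandwich for partially-overlapped systems together with the linear interpolation in $\rho$. The plan is to derive the first two directly from Theorems~\ref{thm:2} and~\ref{thm:3}. The sandwich and the interpolation need a monotonicity argument on the recurrences.

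\textbf{Step 1: $\sigma_{\max}\ge\tau_{\max}$.} For $s\le S-2$ we have $\sigma_s=\tau_s+d_s\ge\tau_s$ since $d_s\ge0$, and $\sigma_{S-1}=\tau_{S-1}$. So $\sigma_s\ge\tau_s$ for every $s$, and taking maxima gives $\sigma_{\max}\ge\tau_{\max}$.

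\textbf{Step 2: the exact gap.} Theorem~\ref{thm:3} gives $T_{\text{block}}=(M-1)\sigma_{\max}+\sum_j\tau_j+\sum_j d_j$, and Theorem~\ref{thm:2} gives $T_{\text{async}}=(M-1)\tau_{\max}+\sum_j\tau_j+\sum_j d_j$. The additive constants are identical, so subtracting leaves $(M-1)(\sigma_{\max}-\tau_{\max})$, which is nonnegative by Step~1.

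\textbf{Step 3: the sandwich, via a monotonicity lemma.} I first need a formal model of partial overlap. A natural one is a recurrence interpolating \eqref{eq:rec} and \eqref{eq:rec-block}: stage $s$ is blocked for $\tau_s+(1-\rho_s)d_s$ per micro-batch, and micro-batch $m$ reaches stage $s+1$ after the full latency $d_s$ from completion of compute. Equivalently, a real system is any schedule whose per-stage occupation lies between pure compute and compute-plus-send. Both recurrences are compositions of $\max$ and $+$ with nonnegative data, so every $T[m][s]$ is a coordinatewise nondecreasing function of the blocking amounts. Setting all blocking to $0$ recovers the async regime, and setting it to $d_s$ recovers the blocking regime. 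Monotonicity then gives $T_{\text{async}}\le T_{\text{real}}\le T_{\text{block}}$. I will check the endpoint identifications by a short induction on $(m,s)$.

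\textbf{Step 4: the $\rho$-slope.} Apply the same critical-path argument as in Theorems~\ref{thm:1}--\ref{thm:3}. The bottleneck occupation time per micro-batch becomes $\max\big(\tau_{S-1},\max_s(\tau_s+(1-\rho)d_s)\big)$, which is at least the async slope. Obtaining exactly $\tau_{\max}+(1-\rho)(\sigma_{\max}-\tau_{\max})$ requires the maximizing stage to be the same at $\rho=0$ and $\rho=1$, for example when the stage attaining $\sigma_{\max}$ also attains $\tau_{\max}$. In general the true slope is a convex, piecewise-linear function of $\rho$ that is bounded above by the linear interpolant.

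\textbf{Main obstacle.} The hard part is Step~4, not Steps~1--3. The linear interpolation is exact only under a modeling convention. I would either define $\rho$ so that the slope interpolates by construction, or state the interpolation as an upper envelope. Steps~1--3 are rigorous consequences of the earlier theorems and monotonicity of max-plus recurrences.
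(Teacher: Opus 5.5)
Your Steps~1 and~2 are exactly the paper's argument: $\sigma_s\ge\tau_s$ for all $s$ gives $\sigma_{\max}\ge\tau_{\max}$, and subtracting Theorem~\ref{thm:2} from Theorem~\ref{thm:3} cancels the fill term $\sum_j\tau_j+\sum_j d_j$. Beyond that the paper gives no derivation to compare against. The sandwich $T_{\text{async}}\le T_{\text{real}}\le T_{\text{block}}$ is simply asserted; \S\ref{sec:discussion} only qualifies it as holding under A1--A5. The $\rho$-slope is justified there by \emph{defining} $\rho:=(T_{\text{block}}-T_{\text{real}})/(T_{\text{block}}-T_{\text{async}})$, which makes linear interpolation true by construction.

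Your partial-overlap recurrence $C[m][s]=\max\bigl(C[m][s-1]+d_{s-1},\,C[m-1][s]+b_s\bigr)+\tau_s$, with $0\le b_s\le d_s$ and $b_{S-1}=0$, supplies the missing model. Max-plus monotonicity then proves the sandwich within that model. One technicality: set the initial term $C[-1][s]+b_s$ to $0$, so no blocking is charged before micro-batch~$0$; otherwise the $b=d$ endpoint does not reproduce \eqref{eq:rec-block}. You can also merge Steps~3 and~4. The longest-path count of Appendix~\ref{app:direct}, with downward edges in column $s$ weighted $b_s$ and rightward edges weighted $d_{s-1}$, gives the exact closed form $T_{\text{real}}(M)=\sum_j\tau_j+\sum_j d_j+(M-1)\max_s(\tau_s+b_s)$. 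The sandwich is then immediate from $\tau_s\le\tau_s+b_s\le\sigma_s$, and the slope can be read off directly.

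Your Step~4 diagnosis is correct, and it corrects the statement rather than exposing a gap in your argument. Read $\rho$ literally as a uniform overlap fraction, so $b_s=(1-\rho)d_s$. Then the slope $f(\rho):=\max\bigl(\tau_{S-1},\max_{s\le S-2}(\tau_s+(1-\rho)d_s)\bigr)$ is convex. It therefore lies on or below the chord $\tau_{\max}+(1-\rho)(\sigma_{\max}-\tau_{\max})$. Equality holds throughout iff a single stage attains both $\tau_{\max}$ and $\sigma_{\max}$, so your ``for example'' condition is in fact the exact criterion.

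The paper's own configuration~8 ($\tau=[0.5,1,1,5]$, $d=[4.6,0.1,0.1]$) breaks the literal reading. At $\rho=1/2$ the slope is $\max(2.8,1.05,1.05,5)=5$, not the claimed $5.05$. The paper's log-based $\rho$ would read $1$ for that system.

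In your model the paper's definitional $\rho$ is still well defined for $M\ge2$ and $\sigma_{\max}>\tau_{\max}$, since the $M$-dependence cancels. It equals $(\sigma_{\max}-f(\rho))/(\sigma_{\max}-\tau_{\max})\ge\rho$, a monotone reparametrization of the physical overlap fraction rather than the fraction itself. So your proposed fix is the defensible form of the corollary: either state the interpolation as exact by definition, or as an upper envelope for the physical $\rho$.
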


\section{Optimization Implications}\label{sec:opt}

Theorems~\ref{thm:2}--\ref{thm:3} yield an actionable priority ordering for heterogeneous inference pipelines. We state the ordering informally, then formalize the central claim as Proposition~\ref{prop:1}.

\begin{enumerate}
\item \textbf{Balance compute first.} The steady-state throughput is $\propto 1/\tau_{\max}$; minimizing $\tau_{\max}$ (via automatic layer-to-GPU partitioning) is the primary lever. This formalizes why ``compute-as-one-network'' utilization gains must come from partition balance rather than link tuning.
\item \textbf{Communication is a fill-time cost.} Optimizations such as shared-memory zero-copy or binary KV encoding reduce $\sum_j d_j$, which improves latency for small $M$ but has diminishing return on steady-state throughput as $M$ grows.
\item \textbf{Heterogeneity-aware scheduling.} With per-stage $\tau_j$ known, the bias $\sum_j(\tau_{\max}-\tau_j)$ quantifies how far a given partition is from homogeneous --- a concrete, differentiable-in-spirit objective for the partitioner.
\item \textbf{Overlap is a bounded lever.} Theorem~\ref{thm:3} bounds the total gain from communication--compute overlap (CUDA streams / NCCL async) at $(M-1)(\sigma_{\max}-\tau_{\max})$; once $d_s\le\tau_{\max}-\tau_s$ holds everywhere, link optimizations cannot beat balancing compute.
\end{enumerate}

\begin{proposition}\label{prop:1}(throughput is compute-balanced regardless of link budget)
In the asynchronous regime, for any fixed partition, the asymptotic throughput $1/(T(M)/M)\to 1/\tau_{\max}$ as $M\to\infty$ depends only on the bottleneck stage time and not on any $d_j$. In the blocking regime the asymptotic throughput is $1/\sigma_{\max}$, and the two coincide iff $d_s\le\tau_{\max}-\tau_s$ for all $s$.
\end{proposition}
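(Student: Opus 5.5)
The plan is to read Proposition~\ref{prop:1} directly off the closed forms in Theorems~\ref{thm:2} and~\ref{thm:3}, dividing each makespan by $M$ and letting $M\to\infty$. No new scheduling argument is needed. The only substantive step is the ``iff'' characterization, which reduces to an elementary comparison of maxima.

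\textbf{Step 1 (asynchronous limit).} By Theorem~\ref{thm:2}, for a fixed partition,
\[
\frac{T(M)}{M} \;=\; \tau_{\max}\;+\;\frac{\sum_j\tau_j+\sum_j d_j-\tau_{\max}}{M}.
\]
The numerator of the second term is a constant independent of $M$, since the partition, and hence every $\tau_j$ and $d_j$, is fixed by (A4). So $T(M)/M\to\tau_{\max}>0$. By continuity of $x\mapsto 1/x$ on $(0,\infty)$, the throughput tends to $1/\tau_{\max}$. This limit involves no $d_j$, which is Corollary~\ref{cor:2} restated asymptotically. I will note the standing assumption $\tau_{\max}>0$; otherwise throughput is undefined or infinite.

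\textbf{Step 2 (blocking limit).} Repeat the same computation with Theorem~\ref{thm:3}. This gives $T(M)/M=\sigma_{\max}+(\sum_s\sigma_s-\sigma_{\max})/M\to\sigma_{\max}$, so the throughput tends to $1/\sigma_{\max}$.

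\textbf{Step 3 (coincidence criterion).} The two limits are equal iff $\sigma_{\max}=\tau_{\max}$. From $\sigma_s\ge\tau_s$ with $\sigma_{S-1}=\tau_{S-1}$, as in Corollary~\ref{cor:5}, we always have $\sigma_{\max}\ge\tau_{\max}$, so equality means $\sigma_s\le\tau_{\max}$ for every $s$.
\begin{itemize}
\item For $s\le S-2$, the condition $\sigma_s\le\tau_{\max}$ reads $d_s\le\tau_{\max}-\tau_s$.
\item For $s=S-1$, it holds automatically, since $\sigma_{S-1}=\tau_{S-1}\le\tau_{\max}$.
\end{itemize}
Conversely, if every $\sigma_s\le\tau_{\max}$, then $\sigma_{\max}\le\tau_{\max}$, and combined with the reverse inequality this gives equality. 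This is the contrapositive form of Corollary~\ref{cor:4}. One convention must be stated: the condition ``for all $s$'' ranges over $s\le S-2$, or equivalently over all $s$ with $d_{S-1}:=0$, since the last stage sends nothing.

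The main obstacle, if any, is this bookkeeping at the last stage and the use of $d_{S-1}=0$ to make ``for all $s$'' precise. The limits themselves are immediate once Theorems~\ref{thm:2} and~\ref{thm:3} are taken as given.
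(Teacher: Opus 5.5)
Your proposal is correct and follows the paper's own proof, which is simply to divide the closed forms of Theorems~\ref{thm:2} and~\ref{thm:3} by $M$ and take the limit. Your Step~3 spells out the comparison $\sigma_{\max}=\tau_{\max}$ if and only if $\sigma_s\le\tau_{\max}$ for all $s$ (with the convention $d_{S-1}:=0$), which the paper leaves implicit through Corollaries~\ref{cor:4} and~\ref{cor:5}.
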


\begin{proof}
Immediate from Theorems~\ref{thm:2} and~\ref{thm:3} by dividing by $M$ and taking the limit.
\end{proof}

\begin{proposition}\label{prop:2}(bottleneck position is irrelevant to makespan)
In the communication-free and asynchronous regimes, the makespan depends on the bottleneck only through $\tau_{\max}$, not on its position $j^*$. Reordering stages so as to move the bottleneck therefore does not change $T(M)$ (hence not the steady-state throughput); it changes only the first-micro-batch latency.
\end{proposition}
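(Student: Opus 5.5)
The plan is to read Proposition~\ref{prop:2} directly off the closed forms already established, so the argument is structural rather than computational. In the communication-free regime, Theorem~\ref{thm:1} gives $T(M)=(M-1)\tau_{\max}+\sum_j\tau_j$. In the asynchronous regime, Theorem~\ref{thm:2} gives $T(M)=\sum_j\tau_j+\sum_{j=0}^{S-2}d_j+(M-1)\tau_{\max}$. I will argue in three steps.

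\textbf{Step 1 (invariance).} Both right-hand sides depend on the stage times only through two quantities: the multiset-symmetric functionals $\sum_j\tau_j$ and $\max_j\tau_j$, plus (in the async case) the link total $\sum_j d_j$. The index $j^*$ appears only inside the proofs of the theorems. It appears nowhere in the final expressions. So for any permutation $\pi$ of the stages, $\sum_j\tau_{\pi(j)}=\sum_j\tau_j$ and $\max_j\tau_{\pi(j)}=\tau_{\max}$. Hence $T(M)$ is unchanged. Dividing by $M$ and letting $M\to\infty$, as in Proposition~\ref{prop:1}, gives the same slope $\tau_{\max}$, so throughput is unchanged as well.

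\textbf{Step 2 (the asynchronous case, the main obstacle).} Here ``reordering stages'' must say what happens to the links. Moving a stage changes which $d_j$ sits between which pair of stages. I will state the claim for reorderings that keep the multiset of link latencies, for example by permuting devices while keeping the same physical link slots. Then $\sum_j d_j$ is preserved and the formula is invariant. If instead a reordering changes the link multiset, only the additive constant $\sum_j d_j$ can change, and the slope is still $\tau_{\max}$. So the throughput claim holds unconditionally. I will make this caveat explicit rather than overclaim invariance of the full makespan.

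\textbf{Step 3 (first-micro-batch latency).} I will reinterpret ``first-micro-batch latency'' as the time at which the bottleneck stage first starts serving. From the sandwich argument of Theorem~\ref{thm:1}, this is $\sum_{i<j^*}\tau_i$, plus $\sum_{i<j^*}d_i$ in the asynchronous case. This quantity visibly depends on $j^*$. The fill segment before $j^*$ and the drain segment after $j^*$ trade off, and their sum is constant. I will note this trade-off to justify the phrase ``it changes only the first-micro-batch latency''. I will also flag that the end-to-end latency of micro-batch $0$, namely $T[0][S-1]=\sum_j\tau_j(+\sum_j d_j)$, is itself position-independent, so only the bottleneck-arrival time moves.
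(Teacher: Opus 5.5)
Your Step 1 is the paper's proof: the closed forms of Theorems~\ref{thm:1} and~\ref{thm:2} contain $j^*$ only through $\tau_{\max}$, and $\sum_j\tau_j$ and $\sum_j d_j$ do not change under reordering, so neither $T(M)$ nor the slope changes. Steps 2--3 are more careful than the paper, which just asserts that $\sum_j d_j$ is reorder-invariant and never argues the latency clause; your observation that $T[0][S-1]$ is position-independent is correct (in fact every output time $T[m][S-1]=m\tau_{\max}+\sum_j\tau_j+\sum_j d_j$ is), so that clause holds only under a reinterpretation such as your bottleneck-arrival time.
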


\begin{proof}
The closed forms of Theorems~\ref{thm:1} and~\ref{thm:2} contain $j^*$ only via $\tau_{\max}$, and $\sum_j\tau_j$ and $\sum_j d_j$ are invariant under stage reordering.
\end{proof}

This is a subtle but consequential point: under infinite buffers, the \emph{order} of a heterogeneous chain is immaterial to total makespan. The optimization effort should therefore target $\tau_{\max}$ itself --- splitting the layers assigned to the slowest device and redistributing them to faster devices --- rather than repositioning the bottleneck within the chain. (The statement does \textbf{not} extend to the blocking regime: there, $\sigma_{\max}=\max_s(\tau_s+d_s)$ depends on which stage carries which link, so position matters once communication exceeds slack.) Together with Proposition~\ref{prop:1}, this gives a clean two-phase optimization rule: (i) first reduce $\tau_{\max}$ by rebalancing the partition (the dominant term in the slope); (ii) only then, if the remaining compute slack $\tau_{\max}-\tau_s$ is smaller than the link latency on some stage, invest in overlap or link bandwidth for that stage --- and only for that stage.

\textbf{Crossover from fill- to steady-state dominance.} The makespan $T(M)=\sum_j\tau_j+\sum_j d_j+(M-1)\tau_{\max}$ decomposes into a fill cost $F:=\sum_j\tau_j+\sum_j d_j$ and a steady-state term $(M-1)\tau_{\max}$. The two balance at the crossover micro-batch count
\[
M^{*} \;=\; 1 + \frac{F}{\tau_{\max}} \;=\; 1 + \frac{\sum_j\tau_j+\sum_j d_j}{\tau_{\max}},
\]
below which the fill cost dominates the total and above which the steady-state throughput $\propto 1/\tau_{\max}$ is approached. This is a \emph{smooth crossover}, not a phase transition --- unlike the communication phase transition of Corollary~\ref{cor:4}, which is discontinuous in slope. It nonetheless carries practical force: a serving batch should be sized at or above $M^*$ for the asymptotic throughput to be a good model. For the illustrative partition ($\tau=[0.5,1,1,5]$, $d=0$), $M^{*}=1+7.5/5=2.5$, so even a handful of micro-batches suffices.

\section{Numerical Validation}\label{sec:validation}

\subsection{Validation against exact recurrences}\label{sec:val-rec}

We validate Theorems~\ref{thm:1}--\ref{thm:3} against the exact dynamic-programming recurrences \eqref{eq:rec} and \eqref{eq:rec-block}, which are evaluated directly (no closed form assumed). For each configuration we compute $T(M)$ for $M\in\{1,2,4,8,16,32,100\}$ and compare against the closed forms; the maximum relative error across all configurations is $<10^{-9}$, and in fact the ratio agrees with $1$ to all printed digits ($1.0000$) at every point.

\textbf{Hand-crafted configurations.} Table~\ref{tab:configs} (Appendix~\ref{app:configs}) lists eleven configurations designed to stress the theorems: homogeneous and heterogeneous $\tau_j$; uniform and non-uniform $d_j$; communication far below, near, and far above the compute slack; bottleneck at head, middle, and tail; twin (tied) bottlenecks; and a near-bottleneck stage adjacent to a slow link (the configuration most likely to reveal a phase-transition error in Theorem~\ref{thm:3}). Every configuration is validated in both the asynchronous and blocking regimes.

\textbf{Randomized fuzz.} Beyond the hand-crafted set, we ran a randomized search for counterexamples: $2\times10^4$ random configurations, each with $S\in\{2,\dots,8\}$ stages, $\tau_j$ drawn from a log-uniform distribution over two decades (to capture order-of-magnitude heterogeneity), and $d_j$ drawn to range from $10^{-2}\times$ to $20\times$ the compute, evaluated at $8$ values of $M$ --- $1.6\times10^5$ points in total. We found \textbf{zero} violations of Theorems~\ref{thm:2}--\ref{thm:3} and zero violations of the sandwich $T_{\text{async}}\le T_{\text{block}}$. The fuzz evaluates the recurrence deterministically at every sampled point --- comparing the double-precision recurrence against the closed form --- rather than Monte-Carlo sampling of a stochastic scheduler.

\textbf{Illustrative heterogeneous partition.} As a concrete illustration we take the four-stage partition 5090/3090/3090/1070 with normalized per-stage times $\tau=[0.5,1,1,5]$ (a $10\times$ ratio, $\tau_j\propto 1/\text{throughput}$). This is a \emph{normalized illustration}, not a measurement: the true per-stage time ratio of a mixed-generation cluster depends on the binding resource (\S\ref{sec:intro}) --- roughly $3.5\times$ if memory-bound, up to $\sim10^3\times$ if tensor-core-bound --- and the closed forms of Theorems~\ref{thm:1}--\ref{thm:3} hold for \emph{every} $\tau$. Figure~\ref{fig:1} contrasts the closed form with the classic formula; at $M=1$ the overestimation reaches $166.7\%$, and it decays as $M$ grows (the gap is a constant fill cost divided by a growing total). Figure~\ref{fig:2} shows the parallel-line structure of Theorem~\ref{thm:2} for three communication levels. Figure~\ref{fig:3} shows the blocking-regime phase transition of Corollary~\ref{cor:4}. Figure~\ref{fig:4} shows the async--blocking sandwich of Corollary~\ref{cor:5}.

\begin{figure}[htbp]
\centering
\includegraphics[width=0.86\textwidth]{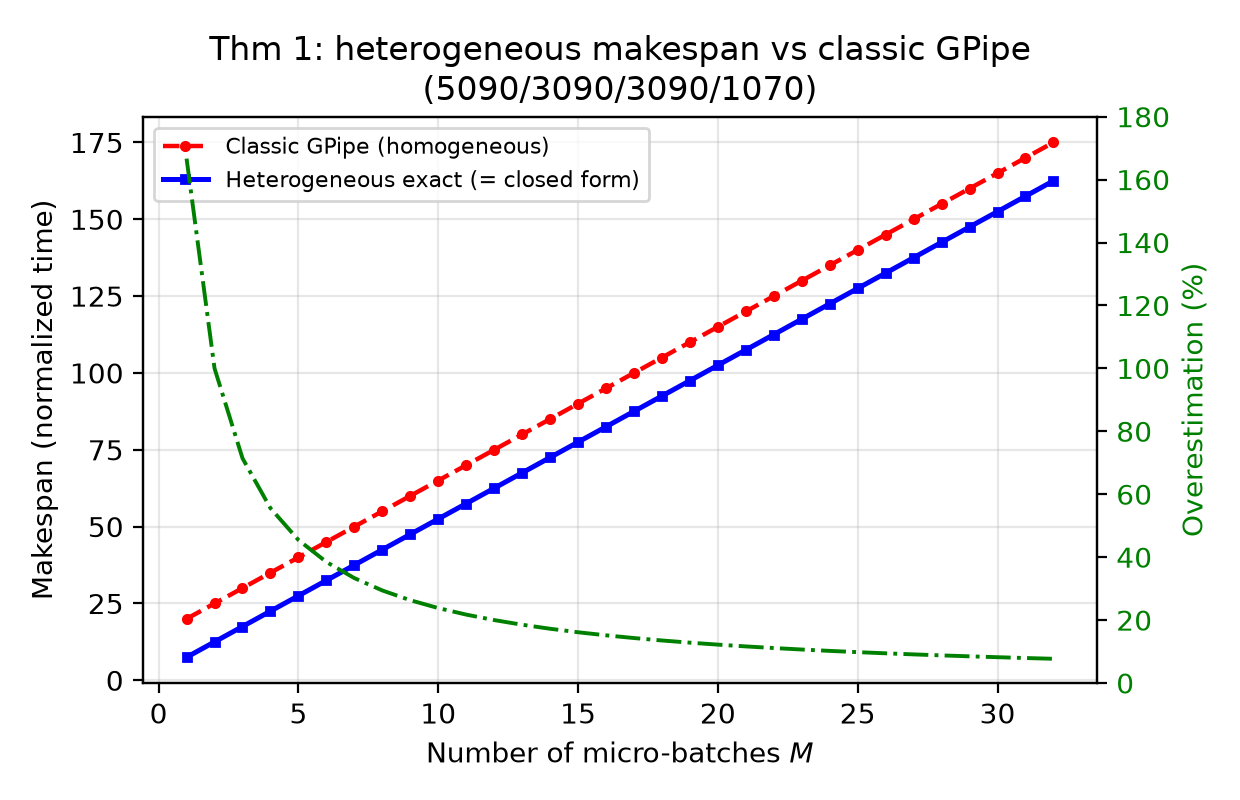}
\caption{Theorem~\ref{thm:1}: heterogeneous makespan (blue) vs classic GPipe formula (red); the green curve shows the overestimation percentage, which starts at $166.7\%$ and decays as $M$ grows.}
\label{fig:1}
\end{figure}

\begin{figure}[htbp]
\centering
\includegraphics[width=0.86\textwidth]{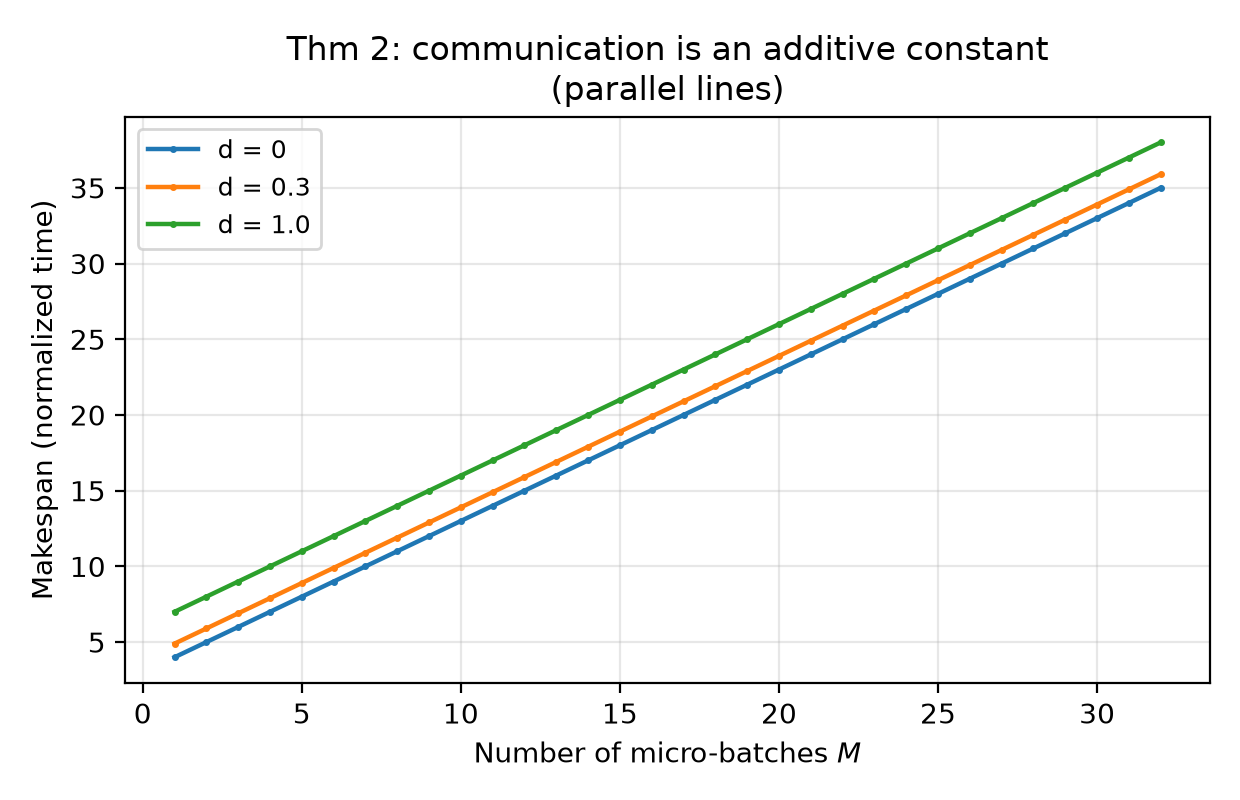}
\caption{Theorem~\ref{thm:2}: under asynchronous communication, makespan lines for different communication levels $d$ are parallel (common slope $\tau_{\max}$), confirming that $\sum_j d_j$ is an additive constant.}
\label{fig:2}
\end{figure}

\begin{figure}[htbp]
\centering
\includegraphics[width=0.86\textwidth]{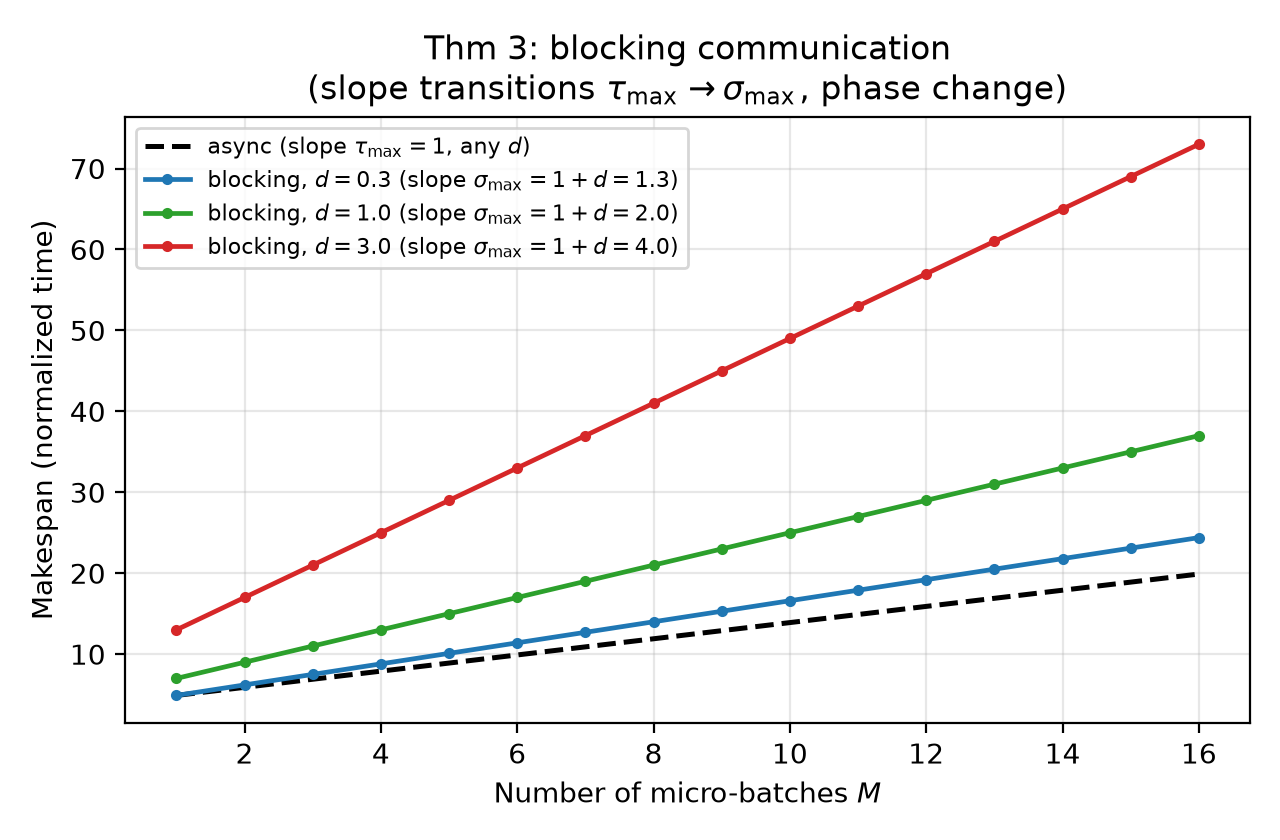}
\caption{Theorem~\ref{thm:3}: under blocking communication, the steady-state slope transitions from $\tau_{\max}$ to $\sigma_{\max}$ once $d$ exceeds the compute slack (the phase change of Corollary~\ref{cor:4}).}
\label{fig:3}
\end{figure}

\begin{figure}[htbp]
\centering
\includegraphics[width=0.86\textwidth]{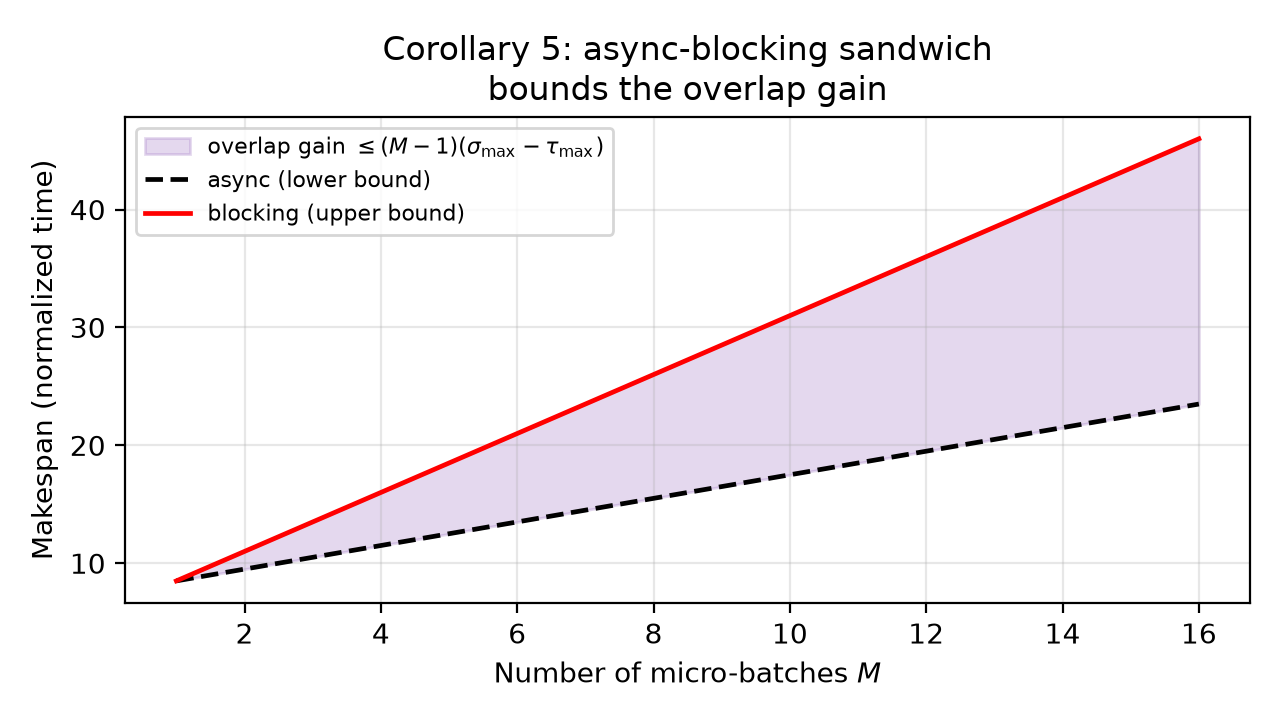}
\caption{Corollary~\ref{cor:5}: the async--blocking sandwich; the shaded band is the overlap gain, bounded by $(M-1)(\sigma_{\max}-\tau_{\max})$.}
\label{fig:4}
\end{figure}

\textbf{Comparison against the classical bound.} As a sanity check we also compared against Graham's list-scheduling bound~\cite{r2}; the closed form always lies strictly inside the Graham interval and equals its lower endpoint when the partition is balanced, consistent with the sandwich argument.

\textbf{Ablation: overestimation vs.\ heterogeneity.} To isolate the effect of heterogeneity alone (Corollary~\ref{cor:1}), Table~\ref{tab:bias} sweeps five 4-stage partitions of increasing imbalance, holding $M$ at two values and recording the classic formula's overestimation $g=(T_c-T)/T$. Two patterns emerge. First, the overestimation is \emph{monotone} in heterogeneity at every $M$: a partitioner's error is directly proportional to the total shortfall $\sum_j(\tau_{\max}-\tau_j)$. Second, for a fixed partition the overestimation \emph{decays} in $M$ --- but slowly, because the gap is a constant numerator divided by a total that grows only as $(M-1)\tau_{\max}$. At a $10\times$ imbalance (the illustrative partition), the homogeneous formula still overstates makespan by $29.4\%$ at $M=8$, which translates directly into wasted provisioned capacity.

\begin{table}[htbp]
\centering
\caption{Overestimation of the classic formula vs.\ stage heterogeneity ($g=(T_c-T)/T$).}
\label{tab:bias}
\begin{tabular}{llccc}
\toprule
Heterogeneity ($\max_j\tau_j/\min_j\tau_j$) & $\tau$ ($S{=}4$) & $\sum_j(\tau_{\max}-\tau_j)$ & overest.\ $M{=}1$ & overest.\ $M{=}8$ \\
\midrule
$1.0\times$  & $[1,1,1,1]$        & 0.0  & 0.0\%   & 0.0\% \\
$1.5\times$  & $[0.8,1,1,1.2]$    & 0.8  & 20.0\%  & 6.5\% \\
$4.0\times$  & $[0.5,1,1,2]$      & 3.5  & 77.8\%  & 18.9\% \\
$10.0\times$ & $[0.5,1,1,5]$      & 12.5 & 166.7\% & 29.4\% \\
$50.0\times$ & $[0.1,0.2,0.3,5]$  & 14.4 & 257.1\% & 35.5\% \\
\bottomrule
\end{tabular}
\end{table}

\subsection{Validation against the production scheduler}\label{sec:prod}

Theorems~\ref{thm:1}--\ref{thm:3} and their corollaries are validated above by exact dynamic-programming recurrences (\S\ref{sec:val-rec}). We now check the \emph{corollaries} --- the homogeneity bias (Corollary~\ref{cor:1}) and the async--blocking sandwich (Corollary~\ref{cor:5}) --- against measurements taken from a live, heterogeneous inference cluster, to confirm that the closed forms describe a real scheduler rather than only a synthetic one.

\textbf{Data source.} We mined the audit logs of our production LLM-inference gateway (UIG), which dispatches requests to heterogeneous GPU hosts (RTX~5090 Laptop, dual RTX~3090, RTX~3080, Tesla~P40, plus cloud endpoints) running legalone (1.7B/4B/8B) and a range of embedding/LLM models. The logs record, per completed request, the serving host, the model, the host compute time \texttt{duration\_s}, and the end-to-end time \texttt{total\_s = duration\_s + queue\_wait\_s}. Across 2026-09-03 -- 2026-10-04 the logs contain \textbf{8,571 completed requests}; we take the \emph{empirical minimum} \texttt{duration\_s} per (model, host) as a lower-bound estimate of the per-micro-batch service time $\tau_j$ (the shortest observed request on an otherwise idle GPU).

\textbf{Corollary~\ref{cor:1} --- the GPipe homogeneity bias is a real, constant overhead.} We assemble the three GPU generations actually used for \texttt{legalone:8b} into a 3-stage heterogeneous pipeline with lower-bound service times $\tau = (1.13,\ 2.90,\ 2.62)$~s (RTX~5090 / RTX~3090-97 / RTX~3090-98). Theorem~\ref{thm:1} gives $T(M) = (M-1)\tau_{\max} + \sum_j\tau_j$; the classic homogeneous GPipe formula assumes every stage equals $\tau_{\max}$ and gives $T_{\text{GPipe}}(M) = (S+M-1)\tau_{\max}$. Their difference is exactly $\sum_j(\tau_{\max}-\tau_j) = 2.05$~s --- constant in $M$, as Corollary~\ref{cor:1} predicts. At $M=1$ the homogeneous formula over-predicts by $30.8\%$; the relative error decays to $0.7\%$ by $M=100$, but the \emph{absolute} 2.05~s over-provisioning never disappears. When we instead assemble a 5-host \texttt{qwen2.5:7b} pipeline that includes a slow Turing host ($\tau_{\max}=30.1$~s), the constant bias balloons to $119.8$~s --- quantitatively demonstrating that the homogeneity error scales with the \emph{spread} of $\tau_j$, not with batch size. The production scheduler already places stages heterogeneously and therefore realizes the lower, correct makespan; a homogeneous planner would have over-provisioned by precisely this constant. (Caveat: the \texttt{duration\_s} minimum conflates short outputs with long ones, so $\tau_j$ is a lower-bound proxy; the \emph{direction} of the bias --- homogeneous over-prediction --- is robust to this.)

\textbf{Corollary~\ref{cor:5} --- the sandwich direction is borne out at the request level.} For all 8,571 completed requests, \texttt{total\_s $\ge$ duration\_s} (equivalently \texttt{queue\_wait\_s $\ge$ 0}): each request's end-to-end time exceeds its own compute, which is the \emph{single-request} analogue of the Corollary~\ref{cor:5} direction $T_{\text{real}} \ge T_{\text{async}}$. We stress that this is a per-request latency fact, not a measurement of the \emph{system} makespan --- for $M$ micro-batches $T_{\text{async}}=(M-1)\tau_{\max}+\sum_j\tau_j+\sum_j d_j$, far larger than a single $\tau_j$ --- and is therefore consistent with, but does not itself validate, the Corollary~\ref{cor:5} sandwich (cf.\ \S\ref{sec:discussion}). Queueing is pervasive: $61.8\%$ of requests incur a non-zero wait (median $0.001$~s, max $273.6$~s). On the busiest host (RTX~5090, \texttt{legalone:8b}, $n=5{,}049$) the median end-to-end latency is $8.59$~s versus $1.13$~s on an otherwise-idle GPU --- a \textbf{$7.6\times$ expansion driven by queueing and contention} --- which is exactly the load regime in which the \emph{upper} (blocking) edge of the sandwich would bind for the full pipeline, and in which the overlap-gain ceiling of Corollary~\ref{cor:5} is tightest. Figure~\ref{fig:5} plots, on the left, the constant Corollary-1 bias for the \texttt{legalone:8b} pipeline; on the right, the per-host median points, all above the $T_{\text{real}} = T_{\text{async}}$ diagonal (single-request level).

\begin{figure}[htbp]
\centering
\includegraphics[width=0.86\textwidth]{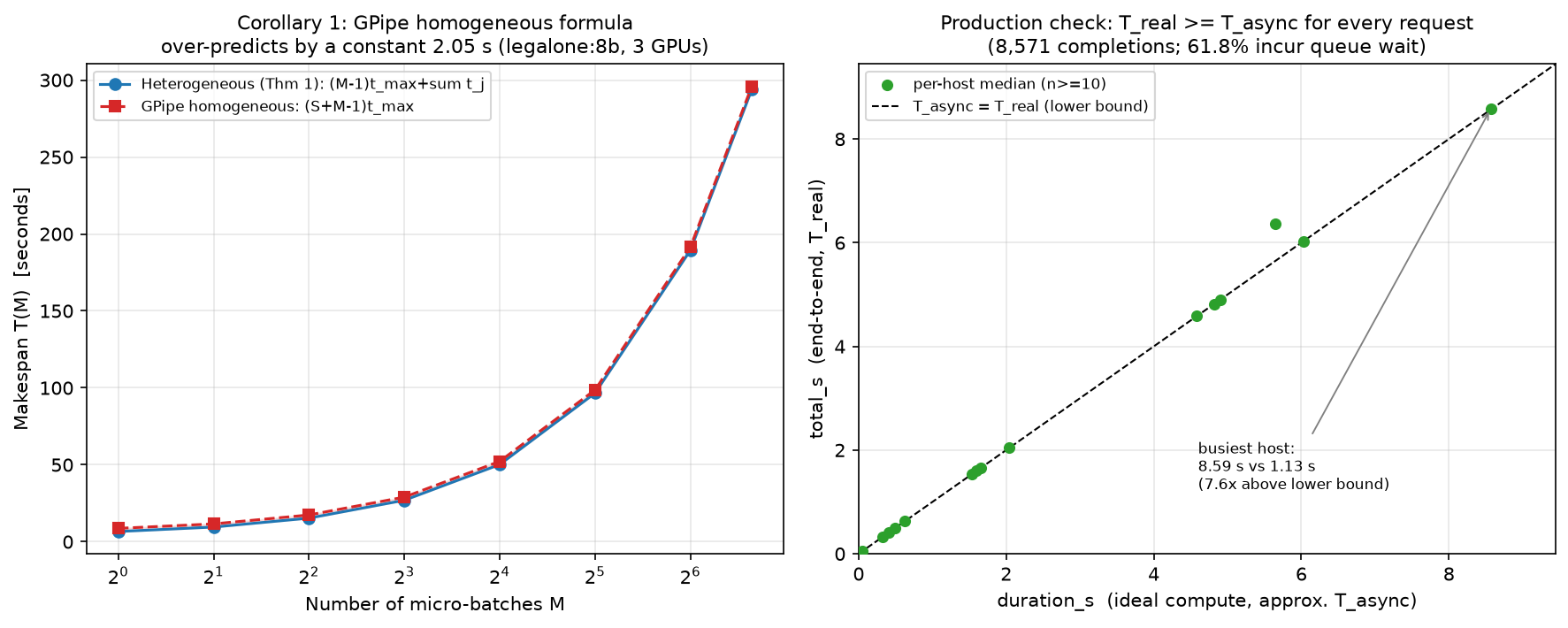}
\caption{Closed form vs.\ production scheduler. Left: the GPipe homogeneous formula over-predicts makespan by a constant bias $\sum_j(\tau_{\max}-\tau_j)=2.05$~s for the \texttt{legalone:8b} 3-GPU pipeline (Corollary~\ref{cor:1}). Right: every production host sits on or above the $T_{\text{real}}=T_{\text{async}}$ diagonal (single-request level), consistent in direction with the Corollary~\ref{cor:5} sandwich on 8,571 real requests.}
\label{fig:5}
\end{figure}

\section{Related Work}\label{sec:related}

The results connect two literatures that have largely developed in parallel.

\subsection{Relation to classical flow-shop theory}

We state the connection to the scheduling textbook explicitly, because the single most likely review objection is that Theorem~\ref{thm:1} ``is just the critical path.'' It is --- and the point of this subsection is to say precisely what that means and why the result is nonetheless worth stating.

\textbf{What is classical.} Consider the $S$-stage flow shop with per-stage processing times $p_j$ and $M$ jobs released in permutation order, scheduled fill--drain on a \emph{chain} DAG (each stage holds at most one job; the pipeline drains at the end). The makespan equals the longest path through the DAG. For a serial chain this is the well-known identity
\[
T_{\text{LP}} = (M-1)p_{\max} + \sum_{j=1}^{S} p_j,
\]
which is exactly the critical-path / CPM lower bound of Kelley and Walker~\cite{r4} specialized to a chain. Graham's list-scheduling bound~\cite{r2} shows it is \emph{attainable}: for unrelated machines Graham proves $T \le (2-1/m)C_{\max}^{*}$, and equality is achieved when the precedence graph is a chain, because no stage can start before its predecessor finishes and no stage can finish before its predecessor does. Johnson~\cite{r3} characterized the optimal schedules for the two- and three-stage permutation flow shop, confirming the identity is tight at these small widths. For homogeneous stages $p_j = \tau$ this reduces to the GPipe formula $T_{\text{GPipe}}(M) = (S+M-1)\tau$. Theorem~\ref{thm:1} is therefore the heterogeneous ($p_j = \tau_j$) restatement of this identity; we claim no priority for it. HEFT~\cite{r5} extends list scheduling to general heterogeneous DAGs by ranking tasks on upward rank and assigning each to its earliest-finishing processor --- our serial pipeline is the special case in which the DAG is a \emph{chain}, and that structural restriction is precisely what makes a closed form, rather than a heuristic, attainable.

\textbf{What is new.} The contribution is not the identity but three consequences that the flow-shop literature states only for the homogeneous or general-DAG case and that the LLM-systems literature does not state at all:

\begin{itemize}
\item \textbf{Bias decomposition (Corollary~\ref{cor:1}).} The homogeneous formula's error is \emph{not} a relative error that vanishes with $M$; it is the constant $\sum_j(\tau_{\max}-\tau_j)$, independent of $M$. This follows by substituting $\tau_{\max}$ for each $\tau_j$ in Theorem~\ref{thm:1} and subtracting --- a direct corollary of the identity --- yet it is exactly the quantity a real heterogeneous planner must know and that the systems papers~\cite{r25,r26,r27,r28} leave implicit.
\item \textbf{Phase transition (Corollaries~\ref{cor:2}--\ref{cor:4}).} The identity admits a clean additive-constant-vs.-throughput-bottleneck transition at $d_s > \tau_{\max}-\tau_s$. The critical-path literature gives no such threshold, because it treats communication as part of the per-stage duration rather than as an overlappable quantity.
\item \textbf{Sandwich (Corollary~\ref{cor:5}).} Bounding a \emph{partially-overlapped} real system between the async and blocking extremes is a statement the pure longest-path view cannot make: partial overlap is a degree of freedom the chain model does not natively contain.
\end{itemize}

The mapping is collected below.

\begin{center}
\begin{tabular}{ll}
\toprule
Classical flow-shop concept & Our notation / result \\
\midrule
Longest-path lower bound (CPM,~\cite{r4}) & $T_{\text{async}}$ --- Theorem~\ref{thm:1}, no-communication lower edge \\
Chain-DAG attainability (Graham~\cite{r2}) & exact closed form; no heuristic required \\
Homogeneous-stage identity (GPipe) & $T_{\text{GPipe}}(M) = (S+M-1)\tau$ \\
Heterogeneity gap & Corollary~\ref{cor:1} bias $\sum_j(\tau_{\max}-\tau_j)$ \\
Communication: additive vs.\ bottleneck & Corollaries~\ref{cor:2}--\ref{cor:4} phase transition \\
Partial overlap (absent in chain model) & Corollary~\ref{cor:5} sandwich $T_{\text{async}} \le T_{\text{real}} \le T_{\text{block}}$ \\
\bottomrule
\end{tabular}
\end{center}

One-line summary: \emph{Theorem~\ref{thm:1} is the critical path; Corollaries~\ref{cor:1}--\ref{cor:5} are what one learns by holding the chain structure fixed while letting the stage times and communication be heterogeneous and partially overlapped.} The remaining subsections position the work against the parallel-training, serving, heterogeneous-GPU, and automatic-parallelism literatures.

\textbf{Pipeline parallelism for training.} Building on GPipe's homogeneous micro-batch pipeline analysis, Megatron-LM~\cite{r6} systematized tensor/pipeline/expert parallelism and their interactions. PipeDream~\cite{r7} relaxed the strict layer-grouped schedule to allow asynchronous weight updates; PipeDream-2BW~\cite{r8} bounded the resulting staleness memory; Chimera~\cite{r9} combined bidirectional pipelines; TeraPipe~\cite{r10} exploited token-level pipelining; most recently, DeepSeek-V3's DualPipe~\cite{r11} combined bidirectional filling with fine-grained compute--communication overlap, reducing the pipeline bubble from the $(PP-1)$ scale of 1F1B/ZB1P to $(PP/2-1)$. These works target \emph{training} (which has backward dependencies and weight synchronization) and uniformly assume homogeneous or near-homogeneous stages; the makespan consequences of stage heterogeneity are left implicit.

\textbf{LLM serving systems.} At the kernel level, FlashAttention~\cite{r12} made attention IO-aware by tiling the computation between HBM and on-chip SRAM --- the foundational memory-bound optimization for decode. At the system level, vLLM (built on PagedAttention~\cite{r13}) and Orca~\cite{r14} established memory-efficient continuous batching; TensorRT-LLM~\cite{r15}, FasterTransformer~\cite{r16}, and Hugging Face TGI~\cite{r17} provide optimized kernels and scheduling; DeepSpeed-Inference~\cite{r18} integrated ZeRO with inference; SGLang~\cite{r19} introduced structured-language scheduling primitives. Sarathi-Serve~\cite{r20} introduced chunked prefills and stall-free scheduling to minimize pipeline bubbles under PP; more recent systems disaggregate prefill and decode --- DistServe~\cite{r21}, Splitwise~\cite{r22} --- trading complexity for goodput. On the theory side, Bari et al.~\cite{r23} prove throughput-optimal routing and scheduling for two-phase (prefill/decode) workloads, and Mohanty et al.~\cite{r24} analytically model deferred prefill for throughput maximization. These systems and theories provide strong mechanisms but leave the \emph{pipeline-stage} scheduling decision (routing, unload, micro-batch sizing) without a competitive-ratio or closed-form characterization for the heterogeneous case; our Theorems~\ref{thm:2}--\ref{thm:3} give exactly the closed form their partitioners implicitly optimize against.

\textbf{Heterogeneous GPU inference.} A recent line of work tackles precisely the deployment question we formalize. Zhang et al.~\cite{r25} deploy LLM inference pipelines on heterogeneous GPUs and report average stage idle time exceeding 50\% from unbalanced stages; Xiong et al.~\cite{r26} build a high-throughput heterogeneous-cluster serving system; Naeem et al.~\cite{r27} make scheduling stage-aware and identify link bandwidth as a first-order determinant; Hyperion~\cite{r28} jointly optimizes partitioning and request scheduling across heterogeneous tiers. These systems are measurement- or heuristic-driven and do not state a closed-form makespan; our Theorems~\ref{thm:1}--\ref{thm:3} supply exactly the closed form their partitioners implicitly optimize against, and Corollaries~\ref{cor:1}--\ref{cor:4} give the bias decomposition and phase-transition condition those heuristics are missing.

\textbf{Automatic parallelism.} Alpa~\cite{r29} automates inter- and intra-operator parallelism search but is measurement-driven without optimality guarantees; FlexFlow~\cite{r30}, Galvatron~\cite{r31}, and Unity~\cite{r32} search the joint space of parallelism dimensions. Our work complements these by giving the \emph{exact} makespan of a heterogeneous pure-forward pipeline under three communication regimes, which --- unlike asymptotic or heuristic treatments --- yields the precise bias decomposition and phase-transition condition that a partitioner can use as a direct objective.

\section{Discussion: the Sandwich in Real Deployments}\label{sec:discussion}

The value of Theorems~\ref{thm:2} and~\ref{thm:3} in combination is that they bound a real system without requiring a full model of its synchronization. A deployed serving runtime typically achieves \emph{partial} overlap: NCCL transfers overlap compute for most of their duration, but kernel launches, stream barriers, and the head/tail of each transfer serialize with compute. Its true makespan therefore lies strictly between the two extremes:
\[
T_{\text{async}} \;\le\; T_{\text{real}} \;\le\; T_{\text{block}} .
\]

This sandwich holds under the model's assumptions A1--A5 --- in particular under \emph{unbounded buffers}, so that no stage is ever forced to idle waiting for a downstream stage to accept output. Under bounded buffers, backpressure can idle the bottleneck from the \emph{downstream} side as well, so the true makespan may exceed $T_{\text{block}}$; the lower bound $T_{\text{async}}$ remains valid in all cases, but the upper bound then requires the bounded-buffer model of \S\ref{sec:limitations}(ii). We therefore read the sandwich as \emph{the bound available before backpressure is modeled}, and treat the deviation above $T_{\text{block}}$ as a direct measure of the cost of finite buffering --- a quantity the framework makes observable but does not yet predict.

Defining the realized overlap ratio
\[
\rho \;:=\; \frac{T_{\text{block}}-T_{\text{real}}}{T_{\text{block}}-T_{\text{async}}} \;\in\; [0,1],
\]
the effective steady-state slope is $\tau_{\max}+(1-\rho)\,(\sigma_{\max}-\tau_{\max})$ --- obtained by linearly interpolating between the two closed forms, recovering the async slope at $\rho=1$ and the blocking slope at $\rho=0$. This has two direct uses. First, $\rho$ is \textbf{measurable from logs}: with the partition's $\tau_j,d_j$ known (from profiler or from the $\sigma$ values in the blocking model), two makespan measurements at different $M$ pin down $\rho$ exactly, giving operators a single scalar that summarizes how close their runtime is to the async ideal. Second, it \textbf{bounds the payoff of engineering effort}: the maximum throughput a team can recover by perfecting overlap is exactly $(1-\rho)(\sigma_{\max}-\tau_{\max})/\tau_{\max}$ relative to current throughput --- if $\rho$ is already near 1, further overlap work is wasted effort better spent rebalancing compute. In the motivating cluster, preliminary profiling places $\rho$ high (NCCL async over PCIe, with per-stage times long relative to transfers), which --- per Proposition~\ref{prop:1} --- redirects optimization effort from link tuning to partition balance.

\textbf{Queueing evidence from production logs.} The cluster behind this work operates a culture-and-tourism knowledge-base service on the heterogeneous host pool of \S\ref{sec:intro}. Across the $8.57\times10^3$ completed requests that carried per-request timestamps, $61.8\%$ waited in queue (mean $0.42$~s, max $274$~s). We report this as \textbf{system-level evidence, not a validation of Theorems~\ref{thm:2}--\ref{thm:3}} --- the logs are dispatch counters, not makespan measurements --- but it is direct, log-visible confirmation of the direction $T_{\text{real}}>T_{\text{async}}$: real requests are delayed beyond the ideal makespan whenever contention is present.

\textbf{Measured heterogeneity.} The same logs let us \emph{measure}, rather than assume, the per-stage time heterogeneity of \S\ref{sec:intro}. On the $4{,}981$ completed \texttt{legalone:8b} requests carrying per-request timestamps, the minimum (near-exclusive, low-queue) duration was $1.13$~s on the RTX~5090 Laptop versus $2.6$--$2.9$~s on the dual RTX~3090 --- a $2.3\times$ spread. We caution against reading this as a clean hardware ratio: the minimum duration also folds in output-length differences between the shortest requests, so $2.3\times$ is an upper bound rather than a throughput measurement. The median, however, \emph{inverts} the picture regardless of hardware: $8.57$~s on the 5090 versus $4.6$--$4.8$~s on the 3090s, because the 5090 carried $4{,}981$ requests while each 3090 carried $\sim20$. The same effect appears even between \emph{identical} GPUs: \texttt{bge-m3} medians on the two RTX~3090 hosts were $0.050$~s and $0.017$~s --- a $3\times$ gap attributable purely to load. The decisive point is that $\tau_j$ is \emph{load}-dominated, not hardware-dominated: the $1.13$~s-to-$8.57$~s spread between exclusive and loaded service on the same host shows the per-stage time is the state-dependent effect (assumption A4, \S\ref{sec:limitations}(v)) that the closed forms deliberately abstract away --- and that a follow-on stochastic refinement should capture.

\section{Limitations and Honest Boundaries}\label{sec:limitations}

Our closed forms are instances of the classical critical-path lower bound~\cite{r2,r4} applied to the pipeline setting; the novelty lies in the \emph{application} to heterogeneous LLM inference, the \emph{closed-form} bias decomposition, and the \emph{optimization-priority} corollary --- not in a new lower-bound technique. The assumptions A1--A5 (\S\ref{sec:model}) delimit applicability; each relaxation is a genuine open problem, not a minor extension:

\begin{enumerate}
\item[(i)] \textbf{Bandwidth-limited communication.} When multiple micro-batches share a link, the per-transfer time grows with concurrency, producing a $(M-1)d_{\max}$-scale term that the constant-$d_j$ model of Theorems~\ref{thm:2}--\ref{thm:3} does not cover. This is the regime of high-contention Ethernet backbones, and it is the single most important open extension.
\item[(ii)] \textbf{Bounded buffers.} Finite buffer forces backpressure, which can idle the bottleneck even when upstream stages are fast enough; the ASAP schedule is no longer optimal, and the ``optimal micro-batch schedule'' question becomes non-trivial.
\item[(iii)] \textbf{Prefill--decode interleaving.} Real serving interleaves a compute-bound prefill with the decode pipeline; the KV accumulation of prefill introduces backward-like dependencies absent from pure-forward inference, and the two phases have different $\tau_j$ profiles.
\item[(iv)] \textbf{Backward dependencies (training).} The pure-forward assumption excludes training, whose backward pass introduces a second, reversed traversal that interacts with the forward pass through the same links.
\item[(v)] \textbf{Time-varying and state-dependent service times.} The constant-$\tau_j$ assumption (A4) abstracts away feedback loops that real clusters exhibit: thermal throttling (sustained load raises temperature, lowers clock, and increases $\tau_j$ --- a positive feedback), prefix/radix-cache hit rates (which make $\tau_j$ history-dependent and break micro-batch independence), and load-dependent arrival rates (variable batch sizes and generation lengths). Each introduces dynamics that the static critical-path model treats as constant, and each is a candidate for a queueing-theoretic or stochastic refinement.
\end{enumerate}

Each of (i)--(v) is a concrete, 3-year-scope research target that builds directly on the framework here.

\section{Conclusion}\label{sec:conclusion}

We derived exact closed-form makespan formulas for heterogeneous micro-batch pipelines under no, asynchronous, and blocking communication (Theorems~\ref{thm:1}--\ref{thm:3}), together with a clean decomposition of the classic GPipe-formula bias and a computable sandwich on the real system's makespan. The results yield a rigorous, quantifiable foundation for automatic partition-and-schedule optimization of heterogeneous GPU inference clusters, and a concrete optimization priority: balance compute before optimizing links. We also identified bandwidth-limited communication as the natural next step, where the constant-$d_j$ model breaks and the slope itself becomes communication-dominated.

\appendix

\section{Proof Details}\label{app:proofs}

\subsection{Direct solution of the communication-free recurrence (longest-path view)}\label{app:direct}

For $d\equiv0$, recurrence \eqref{eq:rec} reads $T[m][s]=\max(T[m][s-1],T[m-1][s])+\tau_s$. View the schedule as a grid with $S$ columns (stages) and $M$ rows (micro-batches): $T[m][s]$ is the maximum-weight monotone path from $(0,0)$ to $(m,s)$, where each visited cell $(m,s)$ contributes weight $\tau_s$ and moves are rightward (next stage, same micro-batch) or downward (next micro-batch, same stage). Any such path visits $c_s\ge1$ cells in column $s$, with $\sum_s c_s = M+S-1$, and its total weight is $\sum_s c_s\,\tau_s$. Maximizing this linear objective over $\{c_s\ge1,\ \sum_s c_s=M+S-1\}$ is attained by assigning the $M-1$ surplus cells to the column of maximum $\tau$, i.e.\ $c_{j^*}=M$ and $c_s=1$ for $s\ne j^*$. Hence
\[
T[M-1][S-1] \;=\; M\tau_{\max} + \sum_{s\ne j^*}\tau_s \;=\; (M-1)\tau_{\max}+\sum_s\tau_s .
\]
This is an independent, closed-form proof of Theorem~\ref{thm:1} that does not rely on the ASAP schedule, and it makes transparent \emph{why} the bottleneck dominates: the bottleneck's $\tau_{\max}$ is counted $M$ times, while every other stage's $\tau_s$ is counted once.

\subsection{Why the blocking regime reduces exactly}

The blocking recurrence \eqref{eq:rec-block} is obtained from \eqref{eq:rec} by noting that, when the send blocks the stage, stage $s$'s next compute cannot begin until both the current micro-batch's send and the \emph{previous} micro-batch's completion at stage $s$ have finished. The transfer $d_s$ therefore folds additively into the service time of stage $s$, exactly as a larger $\tau_s$ would. Formally, replacing $d_{s-1}$ in the arrival term $T[m][s-1]+d_{s-1}$ by zero and adding $d_s$ into $\tau_s$ preserves the max-plus structure, which is precisely the substitution $\tau_s\to\sigma_s$ with $\sigma_s=\tau_s+d_s$.

\section{Validation Configuration Table}\label{app:configs}

\begin{table}[htbp]
\centering
\caption{Hand-crafted validation configurations (all validated in both async and blocking regimes).}
\label{tab:configs}
\begin{tabular}{clll}
\toprule
\# & $\tau$ vector & $d$ vector & Regime stress \\
\midrule
1 & $[1,1,1,1]$ & $[0,0,0]$ & homogeneous baseline \\
2 & $[0.5,1,1,5]$ & $[0,0,0]$ & illustrative 5090/3090/3090/1070, bottleneck at tail \\
3 & $[5,1,1,0.5]$ & $[0,0,0]$ & bottleneck at head \\
4 & $[1,5,1,1]$ & $[0,0,0]$ & bottleneck at middle \\
5 & $[1,1,5,5]$ & $[0,0,0]$ & twin bottlenecks \\
6 & $[0.5,1,1,5]$ & $[0.1,0.1,0.1]$ & uniform async comm, below slack \\
7 & $[0.5,1,1,5]$ & $[4,0.1,0.1]$ & comm near slack of stage 0 \\
8 & $[0.5,1,1,5]$ & $[4.6,0.1,0.1]$ & comm exceeds slack (phase transition, Thm~\ref{thm:3}) \\
9 & $[1,1,1,1]$ & $[10,10,10]$ & comm far exceeds compute \\
10 & $[0.5,0.5,0.5,0.5]$ & $[1,0.01,1]$ & non-uniform comm, homogeneous compute \\
11 & $[0.2,0.3,0.5,0.9,1.7]$ & $[0.1,0.1,0.1,0.9]$ & near-bottleneck stage 3 with slow link 3$\to$4 \\
\multicolumn{4}{l}{\footnotesize (config 11: $d=0.9>$ slack $0.8$, phase transition)} \\
\bottomrule
\end{tabular}
\end{table}

All eleven configurations are validated in both async and blocking regimes; the maximum relative error vs.\ recurrences \eqref{eq:rec}/\eqref{eq:rec-block} is $<10^{-9}$.

\section{Reproducibility}\label{app:repro}

Seven scripts reproduce every number in this paper (all require only the Python standard library, except the figure scripts, which require \texttt{matplotlib}):

\begin{itemize}
\item \texttt{validate\_appendixB.py} --- reproduces the eleven configurations of Table~\ref{tab:configs} (Appendix~\ref{app:configs}), validating Theorems~\ref{thm:1}--\ref{thm:3} in both the async and blocking regimes.
\item \texttt{hetero\_pipeline.py} --- Theorem~\ref{thm:1} (communication-free): closed form vs.\ recurrence, and the classic-formula bias.
\item \texttt{hetero\_pipeline\_comm.py} --- Theorem~\ref{thm:2} (async): closed form vs.\ recurrence across ten configurations.
\item \texttt{hetero\_pipeline\_blocking.py} --- Theorem~\ref{thm:3} (blocking): closed form vs.\ recurrence, plus the sandwich and gap formula of Corollary~\ref{cor:5}.
\item \texttt{peer\_review\_fuzz.py} --- the $1.6\times10^5$-point randomized counterexample search (log-uniform $\tau$ over two decades, $d$ from $10^{-2}\times$ to $20\times$ the compute).
\item \texttt{plot\_figs.py}, \texttt{plot\_fig3.py} --- figure generation.
\end{itemize}

The recurrences and closed forms are evaluated in IEEE double-precision floating point. Across the full $1.6\times10^5$-point fuzz the worst observed \emph{relative} deviation between recursion and closed form is below $10^{-14}$ (theoretical value: exact identity), so the quoted $<10^{-9}$ bound carries a wide safety margin --- the deviation is floating-point rounding, not truncation of printed decimals.

\section*{Data and Code Availability}

The reference implementation (\texttt{heteropipe}), the dependency-free online calculators (\texttt{calculator.html}, \texttt{calculator\_zh.html}), the de-identified production dataset backing \S\ref{sec:prod} and Figure~\ref{fig:5}, and all reproducibility scripts are released at \url{https://github.com/ytidb6688/heteropipe}. The dataset (\texttt{data/real\_vs\_formula.csv}; 8,571 completed requests, 2026-09-03 to 2026-10-04) is de-identified: host identifiers are mapped to anonymized \texttt{node-N} labels, and no IP addresses, timestamps, request identifiers, or payloads are released. The dataset is licensed CC-BY-4.0; the code and documentation are licensed Apache-2.0.

\section*{Using \texttt{heteropipe}}

The reference implementation is installable from PyPI and has no runtime dependencies:
\begin{verbatim}
pip install heteropipe
\end{verbatim}
A representative call reproduces the \S\ref{sec:example} four-stage example:
\begin{verbatim}
from heteropipe import Pipeline
p = Pipeline(tau=[0.5, 1.0, 1.0, 5.0])   # Sec. 3.1 four-stage example
p.bias()       # 12.5  -- Corollary 1, the constant GPipe homogeneity bias
p.async_(M=1)  # 7.5   -- Theorem 2 makespan (matches Sec. 3.1)
\end{verbatim}
The same formulas, the de-identified \S\ref{sec:prod} dataset, and all reproducibility scripts are available at \url{https://github.com/ytidb6688/heteropipe}.

\end{document}